\newcommand*\samethanks[1][\value{footnote}]{\footnotemark[#1]}
\newcommand{\porm}{\mbox{\sc PoRM}}
\newcommand{\ppi}{\mbox{\sc ppi}}
\newcommand{\tp}[2]{\left((#1),#2\right)}
\newcommand{\ck}{\lceil k/2 \rceil}
\newcommand{\fk}{\lfloor k/2 \rfloor}
\newcommand{\rev}{\mbox{\sc revenue}}
\newcommand{\wf}{\mbox{\sc welfare}}
\newcommand{\mprg}{\mbox{\sc mprg}}
\crefname{equation}{Equation}{Equations}
\crefname{theorem}{Theorem}{Theorems}
\crefname{definition}{Definition}{Definitions}
\crefname{figure}{Fig}{Figs}
\crefname{section}{Section}{Sections}
\crefname{lemma}{Lemma}{Lemmas}
\crefname{proposition}{Proposition}{Propositions}
\crefname{appendix}{Appendix}{Appendixes}
\crefname{nclaim}{Claim}{Claims}
\crefname{claim}{Claim}{Claims}
\crefname{observation}{Observation}{Observations}
\crefname{enumi}{Condition}{Conditions}
\begin{document}
\mainmatter  

\title{Randomized Revenue Monotone Mechanisms for Online Advertising}

\author{Gagan Goel \inst{1}
\and MohammadTaghi Hajiaghayi \inst{2} \thanks{supported in part by NSF CAREER award 1053605, ONR YIP award N000141110662, DARPA/AFRL award FA8650-11-1-7162.}\and Mohammad Reza Khani\inst{2} \samethanks}

\institute{
Google Inc., New York, NY. \protect\url{gagangoel@google.com}
\and University of Maryland, College Park, MD. \protect\url{hajiagha@cs.umd.edu}.
\and University of Maryland, College Park, MD. \protect\url{khani@cs.umd.edu}.
}

\maketitle

\begin{abstract}
Online advertising is the main source of revenue for many Internet firms. A central component of online advertising is the underlying mechanism that selects and prices the winning ads for a given ad slot. In this paper we study designing a mechanism for the Combinatorial Auction with Identical Items (CAII) in which we are interested in selling $k$ identical items to a group of bidders each demanding a certain number of items between $1$ and $k$. CAII generalizes important online advertising scenarios such as image-text and video-pod auctions \cite{GK14}. In image-text auction we want to fill an advertising slot on a publisher's web page with either $k$ text-ads or a single image-ad and in video-pod auction we want to fill an advertising break of $k$ seconds with video-ads of possibly different durations.

Our goal is to design truthful mechanisms that satisfy Revenue Monotonicity (RM). RM is a natural constraint which states that the revenue of a mechanism should not decrease if the number of participants increases or if a participant increases her bid.

In a recent work \citet{GK14}, it was argued that RM is a desired property to have for smooth functioning of a firm. Since popular mechanisms like VCG are not revenue-monotone, they introduced the notion of Price of Revenue Monotonicity (\porm{}) to capture the loss in social welfare of a revenue-monotone mechanism. \citet{GK14} showed that no deterministic RM mechanism can attain \porm{} of less than $\ln(k)$ for CAII, \ie, no deterministic mechanism can attain more than $\frac{1}{\ln(k)}$ fraction of the maximum social welfare. \citet{GK14} also design a mechanism with \porm{} of $O(\ln^2(k))$ for CAII.

In this paper, we seek to overcome the impossibility result of \citet{GK14} for deterministic mechanisms by using the power of randomization. We show that by using randomization, one can attain a constant \porm{}. In particular, we design a randomized RM mechanism with \porm{} of $3$ for CAII.



Finally we study Multi-group Combinatorial Auction with Identical Items (MCAII) which is an important generalization of CAII. In MCAII the bidders are partitioned into multiple groups and the set of winners should be from a single group. The motivation for MCAII is from online advertising scenarios where, for instance, the set of selected ads may be required to have the same format. We give a randomized mechanism which satisfies RM and IC and has \porm{} of $O(\ln k)$. This is in contrast to $\log^2(k)$ deterministic mechanism that follows from \cite{GK14}. 
\end{abstract}

%
%
%

%

\section{Introduction}
Many Internet firms including search engines, social networks, and online publishers rely on online advertising revenue for their business; thus, making online advertising an essential part of the Internet. Online advertising consists of showing a few ads to a user when she accesses a web-page from a publisher's domain. The advertising can happen in different formats such as text-ads, image-ads, video-ads, or a hybrid of them.

A key component in online advertising is a mechanism which selects and prices the set of winning ads. In this paper we study the design of mechanisms for Combinatorial Auction with Identical Items (CAII). In CAII we want to sell $k$ identical items to a group of bidders; each demand a number of items from $\{1, \ldots, k\}$ and has a single-parameter valuation for obtaining them. Although CAII is a well-motivated model on its own, we note that a few important advertising scenarios such as image-text and video-pod auctions can be modeled by CAII. In image-text auction we want to fill an advertising box on a publisher's web-page with either one image-ad or $k$ text-ads. We note that a large portion of Google AdSense's revenue is from this auction. Image-text auction is a special case of CAII where participants either demand only one item (text-ads) or all $k$ items (image-ads). In video-pod auction there is an advertising break of $k$ seconds which should be filled with video-ads each with certain duration and valuation.

When designing a mechanism, typically one focusses on attaining incentive-compatibility, and maximizing social welfare and/or revenue. In a recent work, \citet{GK14} argue that the mechanisms for online advertising should satisfy an additional property of revenue-monotonicity.  Revenue-monotonicity is a natural property which states that the revenue of a mechanism should not decrease as the number of bidders increase or if the bidders increase their bids. The motivation is that any online firm typically has a large sales team to attract more bidders on their inventory or they invest in new technologies to make bids more attractive. The typical reasoning is that more bidders (or higher bids) lead to more competition which should lead to higher prices. However, lack of revenue-monotonicity of a mechanism is conflicting with this intuitive and natural reasoning process, and can create significant confusion from a strategic decision-making point of view.

Even though Revenue Monotonicity (RM) seems very natural, we note that majority of the well-known mechanisms do not satisfy this property \cite{RCL09,RCL11,GK14}. For example the famous Vickrey-Clarke-Groves (VCG) mechanism fails to satisfy RM as adding one more bidder might decreases the revenue to zero. To see this, consider two identical items to be sold to two bidders. One wants one item with a bid 2, and the other one wants both items with a bid 2. In this case the revenue of VCG mechanism is 2 (for a proof, see for instance \cite[Chapter 9]{NRTV07}). Now suppose we add one more bidder who wants one item with a bid of 2. In this case the revenue of VCG goes down to 0!

It is known that if we require mechanisms to satisfy both RM and IC, not only the mechanism cannot get the maximum social welfare but it can also not achieve Pareto-optimality in social welfare \cite{RCL11}. In light of this, \cite{GK14} introduced the notion of {\em Price of Revenue Monotonicity} (\porm{}) to capture the loss in social welfare for RM mechanisms. Here a mechanism has \porm{} of $\alpha$ if its social welfare is at least $\frac{1}{\alpha}$ fraction of the maximum social welfare in any type profile of participants.   It is shown that, under a mild condition, the \porm{} of any deterministic mechanism for the CAII problem is at least $\ln(k)$, \ie, no deterministic mechanism can obtain more than $\frac{1}{\ln(k)}$ fraction of the maximum social welfare \cite{GK14}. In fact this impossibility result holds even for the case when participants demand either all the items or only one item. On the positive side, \cite{GK14} give a deterministic mechanism with \porm{}  of $O(\ln^2(k))$ for CAII. We note that satisfying RM is hard especially since it is an {\em across instance} constraint.

This work is motivated by the desire to design better mechanisms for CAII. However, the above impossibility result of \cite{GK14} is a bottleneck towards this goal. To overcome this, in this paper, we resort to randomized mechanisms. We say a randomized mechanism satisfies RM if it satisfies RM in expectation\footnote{Since in a  typical online advertising setting, there is a large number of auctions being run everyday, we get sharp concentration bounds.}. Similarly, a randomized mechanism has \porm{} of $\alpha$ if its expected social welfare is not less than $\frac{1}{\alpha}$ fraction of the maximum social welfare.  We significantly improve the performance by designing a randomized mechanism with a constant \porm{}. In particular, our randomized mechanism achieves a \porm{} of $3$.

Finally, we study Multi-group Combinatorial Auction with Identical Items (MCAII) that generalizes CAII.
In MCAII bidders are partitioned into multiple groups and the set of winners has to be only from one group. The motivation is that the publisher sometimes require the ads to be of same format or size for a given ad slot. We design a randomized mechanism for MCAII that satisfies IC and RM with \porm{} $O(\log k)$. An easy corollary of \cite{GK14} gives a deterministic mechanism with a \porm{} $O(\log^2 k)$. 

\section{Related Works}
\citet{GK14} show that RM is a desirable property for web-centeric companies and consider designing mechanisms which satisfy both RM and IC. They introduced the notion of  \porm{} and study CAII and a special case of it - namely, image-text auction. They \cite{GK14} give a deterministic mechanism with \porm{} of $\ln(k)$ and prove that no mechanism which satisfies RM and IC can obtain \porm{} of better than $\ln(k)$ under the following two mild conditions. The first condition is anonymity which states that the outcome shouldn't depend on the identities of the bidders but their type profile. The second condition is independence of irrelevant alternatives which states that decreasing the bid of any losing participant should not hurt a winning participant. \citet{GK14} also give a deterministic mechanism for CAII with \porm{} of $O(\ln^2(k))$ that satisfies IC and RM.

\citet{RCL11} show that for combinatorial auctions, no deterministic mechanism that satisfies RM and IC can get weak maximality. A mechanism is Weakly Maximal (WM) if it chooses an allocation which cannot be augmented to make a losing participant a winner without hurting a winning participant. \citet{RCL09} study randomized mechanisms for combinatorial auctions which satisfy RM and IC. Note that a simple mechanism which chooses a maximal allocation uniformly at random ignoring the valuations of bidders satisfies RM, IC, and WM. \citet{RCL09} add another constraint that a mechanism has to also satisfy Consumer Sovereignty (CS) which means that if a bidder increases her bid high enough, she can win her desired items. Now a new issue is that there is no randomized mechanism which satisfies RM, IC, WM, and CS \cite{RCL09}. In order to avoid this issue they relax CS constraint as follows. For each participant $i$ there has to be $\lambda$ different valuations $v_1 > v_2 > ... > v_{\lambda}$ such that for $j \in \{1, \ldots, \lambda\}$, we have $w_i(v_j) > w_i(v_{j+1}) + \sigma$ where $w_i$ is the probability of winning for participant $i$ and $\sigma > 0$. Roughly speaking relaxed CS constraint means that if participant $i$ increases her bid from zero to infinity she sees at least $\lambda$ jumps of length $\sigma$ in her winning probability. The idea of their mechanism is that for each participant $i$ they find $\lambda$ constant values $c_{i,1}> c_{i,2}> ...> c_{i, \lambda}$ such that regardless of valuations of the other bidders; if the bid of bidder $i$ is between $c_{i,j}$ and $c_{i, j+1}$ then her winning probability is at least $j * \sigma$.
In order to find the constants for each participant they solve a LP whose constraints force RM, IC, Relaxed CS, and WM. 
As you may notice although this mechanism achieves WM, RM and relaxed CM, but can do very poorly in terms of \porm{}. For example suppose you have $n$ participants and each of them wants all items. The valuation of each participant $i$ is bigger than its highest constant $c_{i,1}$. In this case all the participants can win with probability at most $1/n$. Now suppose that the valuation of one of the participants is infinity. She still wins with probability $1/n$ which shows that the PoRM of their mechanism is at least $n$.

\citet{DRS09} show that VCG is revenue monotone if and only if the feasible subsets of winners form a matroid.
\citet{AM02} show that if valuations of bidders satisfy {\em bidder-submodularity} then VCG satisfies RM. Here valuations satisfy bidders submodularity if and only if for any bidder $i$ and any two sets of bidders $S, S'$ with $S \subseteq S'$ we have $\wf(S \cup \{i\}) - \wf(S) \geq \wf(S' \cup \{i\}) - \wf(S')$, where $\wf(S)$ is the maximum social welfare achievable using only bidders in $S$. Note that we can restrict the set of possible allocations in a way such that bidder-submodularity holds. Then we can use VCG on this restricted set of allocations and hence achieve RM. However we can show that it is not possible to get a mechanism with \porm{} better than $\Omega(k)$ by using the mentioned tool. 

\citet{AM02} design a mechanism which is in the core of the exchange economy for combinatorial auctions. A mechanism is in the core if there is no subset of participants including the seller which can collude and trade among each other such that all of them benefit more than the result of the mechanism.
\citet{DM08} show that a core-selecting mechanism which selects an allocation that minimizes the seller's revenue satisfies RM given bidders follow so called {\em best-response truncation strategy}. Therefore, the mechanism of \cite{AM02} satisfies RM if it selects an allocation that minimizes the seller's revenue and the participants follow best-response strategy, however, this mechanism does not satisfy IC.

Another line of related works is around characterizing incentive compatible mechanisms. The classic result of \citet{R79} tells that affine maximizers are the only social choice functions which can be implemented using mechanisms that satisfy IC when bidders have unrestricted quasi-linear valuations. Subsequent works study some restricted cases, see \eg{} \cite{R87,LMN03,BCL+06,SY05}.

There is also a large body of research around designing mechanisms with good bounds on the revenue.
In the single parameter Bayesian setting \citet{M81} designs a mechanism which achieves the optimal expected revenue. \cite{GHW01,GHK+06} consider optimizing revenue in prior-free settings (see \eg{} \cite{NRTV07} for a survey on this).

\section{Our Results and Overview of Techniques}
To give intuition about our approach, we first start with ideas that will not work but are potentially good candidates. To keep the explanation easier let us focus on deterministic mechanisms. Note that the payment of each participant in a deterministic mechanism which satisfies IC is her critical value, \ie{}, the minimum valuation for which she still remains a winner. Assume that all participants demand only one item. In this case we can simply give all the items to the highest $k$ bidders, which sets the critical value (the payment) of each winner to the valuation of the ($k+1$)th highest bidder. If we add one more participant the valuation of the ($k+1$)th highest bidder increases, therefore, the payment of each winner increases  and hence the mechanism satisfies RM.

Now assume we have two types of bidders: A bidder of type A who demand all $k$ items, and a bidder of type B who demands a single item. This scenario is equivalent to the image-text auction for which there is a lower-bound of $\ln(k)$ for the \porm{} of deterministic mechanisms \cite{GK14}. However using randomization we can simply get a \porm{} of $2$. Flip a coin and with probability half give all items to the highest type A bidder and with probability half give $k$ items to the $k$ highest bidders of type B. Here, the expected social welfare is at least half of the maximum social welfare. Note that when the coin flip selects bidders of type A the auction simply transforms to the second price auction of selling one package of items which has RM. When it selects bidders of type B the auction transforms to the case when all bidders demand one item which we explained earlier and has RM. Therefore, the expected revenue is monotone and hence the mechanism satisfies RM.
Expanding the above idea we can partition the bidders into $\log(k)$ groups such that the bidders of each group $i \in \log(k)$ has demand in $[2^{i}, 2^{i+1})$. Then, we randomly select one group and choose the winners from the selected group. However, this partitioning approach does not lead to a \porm{} better than $\log(k)$.

As a second approach instead of partitioning the bidders and sort them by their valuation, we can sort them according to their Price Per Item (\ppi{}) which is the valuation of a participant divided by the number of items she demands. Now consider a simple greedy algorithm as follows. Start from the top of the sorted list of bidders and at each step do the following. If the number of remaining items is enough to serve the current bidder give the items to the bidder and proceed; otherwise stop. Let us call the bidder at which the greedy algorithm stops the {\em runner-up bidder}. Note that the runner-up bidder  has the largest \ppi{} among the loser bidders and let $p$ be her \ppi{}. If each of the winner participant had \ppi{} less than $p$ then she could not win. Therefore, the critical value of each winner participant is her demand multiplied by $p$. Although value $p$ increases if we add more bidders, the number of items sold might decrease. For example consider the case when the bidder with the highest \ppi{} demands all $k$ items. In this scenario she wins all items and pays $k$ multiplied by the \ppi{} of the runner-up bidder. Now if we add one more bidder whose \ppi{} is more than the highest bidder but demands only one item; the new bidder wins and we sell only one item. This potentially decreases the revenue of the greedy mechanism.

For our mechanism we use a combination of the above ideas and an extra interesting technique. We partition the bidders into two groups: high-demand bidders who demand more than $k/2$ items, and low-demand bidders who demand less than or equal to $k/2$ items. With probability $1/3$ the winner is a high-demand bidder with the largest valuation. Similar to the partitioning approach the critical value of the winner is the second largest valuation of the high-demand bidders which can only increase if we add more bidders. With probability $2/3$ we do the following with the low-demand bidders. First we run the greedy algorithm over the low-demand bidders and find the runner-up bidder. The important observation here is that because there is no high-demand bidder, sum of winners' demands ($A$)  is larger than $k/2$. Therefore we are sure that we sell at least $k/2$ items where the price of each item is the \ppi{} of the runner-up bidder. Now we select each winner of the greedy algorithm with probability $\frac{k/2}{A}$ as the true winner of our mechanism. This random selection makes sure that the expected number of sold items is exactly $k/2$. The exact number $k/2$ is important since the expected revenue of the mechanism is $k/2$ multiplied by the \ppi{} of the runner-up bidder. Therefore as the \ppi{} of the runner-up bidder increases if we add more bidders the expected revenue is monotone. 

Now we explain ideas used to design our mechanism for MCAII. We first note that as a corollary of the result of \cite{GK14}, we get a deterministic mechanism with a \porm{} of $\log^2(n)$. In our mechanism, we assign a value to each group and use it as the criterion in order to select the winner group.  Note that a simple value that can be assigned to each group is the maximum social welfare obtainable by the group. However, this way we cannot guarantee RM. Because suppose participant $i^{(g)}$ of group $G^{(g)}$ increases her bid high enough which guarantees that $G^{(g)}$ wins against all other groups no matter what are the valuations of the other participants of $G^{(g)}$.  Therefore, the critical values of the other members of $G^{(g)}$ decreases as $i^{(g)}$ increases her bid and hence can decrease the revenue of the mechanism.

We refer to our assigned value to each group as the Maximum Possible Revenue of the Group (\mprg{}). As name \mprg{} suggests, it shows the maximum revenue we can obtain from each group without the fear of violating RM.
For each $j \in \{1, \ldots, k\}$ and group $G^{(g)}$, let $u^{(g)}_j$ be the maximum price can be set for a single item so that we can sell at least $j$ items to low-demand bidders of group $G^{(g)}$. More formally, $u^{(g)}_j$ is the maximum value where the sum of demands of low-demand bidders whose \ppi{} is larger than $u^{(g)}_j$ in group $G^{(g)}$ is at least $j$. The \mprg{} of group $G^{(g)}$ is $\max(V^{(g)}, \max_{j \in \{1, \ldots, k/2\}} j\cdot u_j^{(g)})$ where $V^{(g)}$ is the highest valuation of high-demand bidders. Intuitively, \mprg{} either sells items to high-demand bidders and obtains revenue of at most $V^{(g)}$ or sells items to low-demand bidders in which we can sell a number of items between $1$ and $k/2$. We select a group with the highest \mprg{} and choose the winners from this group. We are able to show that we can obtain a revenue of at least the second highest \mprg{}. We prove that our mechanism satisfies RM by showing that the second highest \mprg{} increases if we add more bidders.

We show that the \mprg{} of each group is at least $1/\ln(k)$ fraction of the maximum social welfare obtainable by the group. Therefore, as we select the winning group using the \mprg{}s of groups, the \porm{} of our mechanism is $O(\ln(k))$. We provide evidence that indeed the \mprg{} of each group is the closest value to its social welfare that can be  safely used for selecting the winning group without violating RM. Moreover, any randomization over the groups for selecting the winning one according to \mprg{} cannot improve the \porm{} factor.
\section{Preliminary}
\label{sec:prelim}
Let assume we have a set of $n$ bidders $\{1, \ldots, n\}$ and a set of $k$ identical items. Let type profile $\theta$ be a vector containing the type of each bidder $i$ which we show by $\theta_i$. Here $\theta_i$ is pair $(d_i, v_i) \in [k] \times \bbR^+$ where $d_i$ is the number of items she demands and $v_i$ shows her valuation for getting $d_i$ items. Here we assume the demands are publicly known because in our scenario they represent the length of video-ads stored in database while the valuations are private to bidders.

Note that having higher valuation does not necessarily mean that the bidder is more desirable to the seller as she might have a large demand. We define Price Per Item (\ppi{}) of bidder $i$ to be $\frac{v_i}{d_i}$ which we use in our mechanism to compare bidders. 

We show a randomized mechanism ($\M$) by pair $(w, p)$ where $w_i(\theta)$ shows the winning probability of bidder $i$ in type profile $\theta$ and $p_i(\theta)$ is her expected payment.

We use the following Theorem in this paper frequently which is a well-known characteristic of the truthful randomized mechanisms in the single parameter model (see \eg \cite{NRTV07}).
\begin{theorem}
\label{thm:tf}
Randomized mechanism $\M = (w, p)$ is truthful if and only if for any type profile $\theta$ and any bidder $i$ with type $(d_i, v_i)$ the followings hold.
\begin{enumerate}
\item Function $w_i\tp{d_i, v_i}{\theta_{-i}}$ is weakly monotone in $v_i$. \label{cond:1}
\item $p_i(\theta) = v_i \cdot w_i(\theta) - \int_{0}^{v_i} w_i\tp{d_i, t}{\theta_{-i}} dt $ \label{cond:2}
\end{enumerate}
\end{theorem}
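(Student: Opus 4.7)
Proof plan: This is the single-parameter Myerson characterization adapted to randomized mechanisms. Since demands $d_i$ are assumed to be public, the strategic choice of bidder $i$ is only over the reported valuation, so throughout the argument I fix $d_i$ and $\theta_{-i}$ and abbreviate $W(v) := w_i\tp{d_i,v}{\theta_{-i}}$, $P(v) := p_i\tp{d_i,v}{\theta_{-i}}$, and $U(v,v') := v\cdot W(v') - P(v')$ for the expected utility of bidder $i$ with true value $v$ when she reports $v'$. Truthfulness is then the statement $U(v,v) \ge U(v,v')$ for all $v,v'\ge 0$.

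For the necessity direction, I would first derive \cref{cond:1} by the standard two-inequality swap: writing the IC inequality for true value $v$ vs.\ deviation $v'$ and again for true value $v'$ vs.\ deviation $v$, and adding the two yields $(v-v')(W(v)-W(v'))\ge 0$, which is weak monotonicity. For \cref{cond:2}, the same two inequalities give the sandwich $(v-v')W(v')\le P(v)-P(v')\le (v-v')W(v)$ whenever $v>v'$. Because $W$ is monotone (hence Riemann integrable on $[0,v_i]$ and differentiable almost everywhere), letting $v'\to v$ shows $P$ is absolutely continuous with $P'(v)=v\,W'(v)$ a.e., and integrating by parts from $0$ to $v_i$ gives
\[
P(v_i) \;=\; P(0) + v_i\,W(v_i) - \int_0^{v_i} W(t)\,dt.
\]
The theorem's formula then follows once I note $P(0)=0$ is implied by the formula itself (reporting $v=0$ yields the stated payment $0$), which is the customary individual-rationality normalization used in the Myerson form.

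For the sufficiency direction, I would simply substitute the closed form of $P$ into $U(v,v')$ and compute
\[
U(v,v) - U(v,v') \;=\; \int_0^{v} W(t)\,dt - (v-v')W(v') - \int_0^{v'} W(t)\,dt \;=\; \int_{v'}^{v}\bigl(W(t)-W(v')\bigr)\,dt.
\]
Monotonicity of $W$ makes the integrand nonnegative when $v>v'$ and nonpositive when $v<v'$, so in either case the integral is $\ge 0$, which proves truthfulness.

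The only delicate step is the integration argument in necessity, since $W$ is only monotone (not continuous), so I would justify the passage from the sandwich inequality to the integral form either via the standard fact that bounded monotone functions are differentiable a.e.\ and absolutely continuous on compact intervals, or equivalently by viewing $P$ as a Lebesgue--Stieltjes integral against $dW$. Everything else is routine algebra.
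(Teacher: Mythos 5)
The paper does not actually prove \cref{thm:tf}: it is quoted as the standard Myerson characterization for randomized single-parameter mechanisms with a pointer to \cite{NRTV07}, so there is no in-paper argument to compare against. Your overall plan is the right textbook proof, and your sufficiency direction is correct and complete: substituting the payment formula gives $U(v,v)-U(v,v')=\int_{v'}^{v}\bigl(W(t)-W(v')\bigr)\,dt$, which is nonnegative in both cases by monotonicity of $W$. Your remark that $P(0)=0$ is a normalization rather than a consequence of incentive compatibility alone is also the right caveat.

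The necessity direction, however, contains two concrete errors. The sandwich you state, $(v-v')W(v')\le P(v)-P(v')\le (v-v')W(v)$, is not what the two incentive constraints yield and is false in general: a posted-price mechanism with price $p$ and values $v'<p\le v$ has $P(v)-P(v')=p$, which exceeds $(v-v')W(v)=v-v'$ when $v-v'$ is small. The correct sandwich is $v'\bigl(W(v)-W(v')\bigr)\le P(v)-P(v')\le v\bigl(W(v)-W(v')\bigr)$; note that your version would force $P'(v)=W(v)$ a.e., contradicting the $P'(v)=vW'(v)$ you then assert, so the derivation as written is internally inconsistent. Second, the ``standard fact'' that bounded monotone functions are absolutely continuous on compact intervals is false (the Cantor function is the classical counterexample), so differentiating a.e.\ and integrating back is not justified on that basis. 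The standard repair --- and the route you should commit to --- is the alternative you mention in passing: sum the correct sandwich over a partition of $[0,v_i]$ so that both the upper and lower sums converge to the Riemann--Stieltjes integral $\int_0^{v_i} t\,dW(t)$ (which exists because $t$ is continuous and $W$ is monotone), then integrate by parts to obtain $P(v_i)=P(0)+v_iW(v_i)-\int_0^{v_i}W(t)\,dt$, which with the normalization $P(0)=0$ is exactly \cref{cond:2}.
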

\section{Combinatorial Auction with Identical Items}
\label{sec:rcaii}

We build a randomized mechanism ($\M = (w, p)$) satisfying revenue monotonicity and incentive compatibility such that $\porm(\M)$ is equal to $3$.

We call a bidder {\em high-demand} bidder if her demand is greater than $\fk$ otherwise we refer it as {\em low-demand} bidder. Mechanism $\M$ partitions the bidders into two groups of low-demand and high-demand bidders and with probability $1/3$ selects the winning set from the high-demand bidders and with probability $2/3$ from the low-demand bidders. 

We will see that mechanism $\M$ favors high-demand bidders with larger valuations and favors low-demand-bidders with larger $\ppi$s while breaking the ties by the index number of the bidders.
\begin{definition}
\label{def:valuable}
We call low-demand bidder $l_1$ is {\em more valuable} than low-demand bidder $l_2$ and show it by ($l_1 \succ l_2$) if $\ppi_{l_1} > \ppi_{l_2}  \vee (\ppi_{l_1}= \ppi_{l_2} \wedge l_1 < l_2)$. Similarly we call high-demand bidder $h_1$ is {\em more valuable} than high-demand bidder $h_2$ and show it by ($h_1 \succ h_2$) if $v_{h_1} > v_{h_2}  \vee (v_{h_1} = v_{h_2} \wedge h_1 < h_2)$.
\end{definition}

Let's assume that there are $\ell$ low-demand bidders and $h$ high-demand bidders.  By adding some dummy bidders with demand $1$ and valuation zero we assume that the sum of demands of low-demand bidders is always greater than $k$.
Without loss of generality we assume that the first $\ell$ bidders are low-demand bidders and $i \succ i+1$ for any $i \in [\ell - 1]$ (the \ppi{}s of the low-demand bidders decreases by their index) and the remaining $h$ bidders are high-demand-bidders while $i \succ i+1$ for any $i \in \{\ell + 1, \ldots n-1\}$ (the valuations of the high-demand bidders decreases by their index). 

\begin{definition}
\label{def:runner-up}
We call low-demand bidder $r$ the runner-up bidder if $r$ is the smallest value in set $[\ell]$ for which $\sum_{i = 1}^r d_{i} \geq k$. 
\end{definition}

Later we will see that the runner-up bidder is the bidder with the largest \ppi{} and smallest index number who has zero probability of winning. We simply refer to the runner-up bidder as $r$.

We define $A$ to be $\sum_{i = 1}^{r-1} d_i$ which is the sum of demands of low-demand bidders that have \ppi{}s greater than or equal to that of $r$ and have positive probability of winning (see \cref{fig:runner-up}). 

\begin{figure}[!h]
  \centering
      \includegraphics{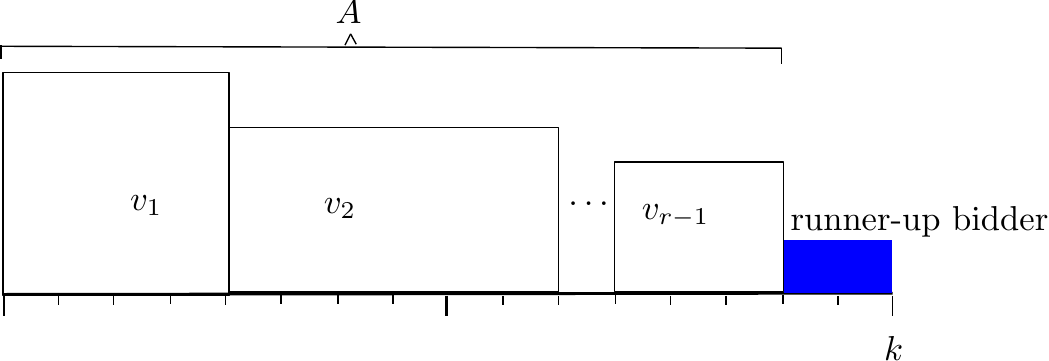}
  \caption{Each rectangle corresponds to a bidder where the height, width, and area represent \ppi{}, demand, and valuation of the bidder respectively. The dark rectangle corresponds to the runner-up bidder whose demand crosses the value $k$.}
  \label{fig:runner-up}
\end{figure}

\begin{observation}
We have $\ck \leq A < k$.
\end{observation}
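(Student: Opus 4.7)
The plan is to split the observation into two one-sided estimates, both of which follow directly from \Cref{def:runner-up} together with the fact that $r \in [\ell]$ is by construction a low-demand bidder, so $d_r \leq \fk$. I would first confirm that $r$ is well-defined: the preceding padding step ensures $\sum_{i=1}^{\ell} d_i > k$, so the set $\{j \in [\ell] : \sum_{i=1}^{j} d_i \geq k\}$ is nonempty and has a minimum.

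For the upper bound $A < k$, the plan is to appeal to the minimality of $r$. Since $r$ is the \emph{smallest} index in $[\ell]$ for which $\sum_{i=1}^{r} d_i \geq k$, the index $r-1$ must fail that condition whenever $r \geq 2$, giving $A = \sum_{i=1}^{r-1} d_i < k$ immediately. If $r = 1$ then $A = 0 < k$ trivially; note that $r = 1$ would force $d_1 \geq k$, which is incompatible with the low-demand bound $d_1 \leq \fk$ as soon as $k \geq 2$, so this edge case does not actually arise.

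For the lower bound $\ck \leq A$, the plan is to rewrite
\[
A \;=\; \sum_{i=1}^{r} d_i \;-\; d_r.
\]
The first term is at least $k$ by the defining inequality of the runner-up, and the second term is at most $\fk$ because $r$ is a low-demand bidder. Combining these and using the identity $k - \fk = \ck$ yields $A \geq k - \fk = \ck$.

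There is no real obstacle here; the observation is essentially a bookkeeping consequence of the runner-up definition and the low-demand threshold $\fk$. The only mildly subtle point is the handling of the degenerate case $r = 1$, and that is dispatched by the low-demand constraint together with the dummy-bidder padding that guarantees the existence of $r$ in $[\ell]$.
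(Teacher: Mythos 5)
Your proof is correct and follows essentially the same route as the paper's: the upper bound from the minimality of $r$, and the lower bound from $\sum_{i=1}^{r}d_i \geq k$ together with $d_r \leq \fk$ since the runner-up is a low-demand bidder. The extra care you take with well-definedness of $r$ and the degenerate case $r=1$ is sound but not needed beyond what the paper records.
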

\begin{proof}
Inequality $A < k$ is the direct result of the way we select runner-up bidder $r$. Inequality $\lceil \frac{k}{2} \rceil \leq A$ follows from the fact that $\sum_i^r d_i \geq k$ and the demand of the runner-up bidder is less than or equal to $\lfloor \frac{k}{2} \rfloor$ by definition of low-demand bidders.
\qed
\end{proof}
Now we are ready to precisely define how $\M$ selects and charges the set of winners. With probability $1/3$ $\M$ selects the most valuable high-demand bidder (which is the high-demand bidder with largest valuation breaking the ties by index). Therefore, the winning bidder in this case is $\ell + 1$ and she pays $v_{\ell + 2}$ which is the second highest valuation among high-demand bidders. Therefore her expected payment is $p^{\M}_{\ell + 1}(\theta) = \frac{v_{\ell + 2}}{3}$.

If we did not select the largest high-demand bidder then mechanism $\M$ uniformly at random selects the winner set from the first $r-1$ low-demand bidders where the probability of selecting each bidder $i \in [r-1]$ is $\frac{\ck}{A}$. In this case if bidder $i$ gets selected she has to pay $d_i \cdot \ppi_r$. Therefore, her expected payment is $p^{\M}_{i}(\theta) = \frac{2\ck}{3A} \cdot d_i \cdot \ppi_r$ since with probability $2/3$ we select low-demand-bidders and with probability  $\frac{\ck}{A}$ bidder $i$ gets selected. The probability $\frac{\ck}{A}$ is selected in a way such that if the low-demand bidders win, the expected number of allocated items is $\ck$ since the sum of demands of the first $r-1$ low-demand bidders is $A$ and each of them gets selected with probability $\frac{\ck}{A}$. 

In summary the expected payments of the bidders in mechanism $\M$ is the following. 
\begin{equation}
p^{\M}_i(\theta) = 
\begin{cases}
0 & \ell + 1 <i\\
\frac{v_{\ell + 2}}{3} & i = \ell + 1 \\
0 & r \leq i \leq \ell\\
\frac{2\ck}{3A} \cdot d_i \cdot \ppi_r & 1 \leq i < r \\
\end{cases}
\end{equation}

In the following first we prove that the allocation function of $\M$ is monotone and then we show that the unique expected payments of the winners calculated using \cref{thm:tf} is equal to the expected payment of mechanism $\M$ which proves that $\M$ truthful.
\begin{observation}
$w_i\tp{d_i, v_i}{\theta_{-i}}$ is monotone in $v_i$.
\end{observation}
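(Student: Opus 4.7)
Since the demand $d_i$ is fixed throughout, bidder $i$ stays in the same (high- or low-demand) category as $v_i$ varies, so I treat the two cases separately; in each I will show that $w_i(\cdot,\theta_{-i})$ is a step function making at most one upward jump. The high-demand case is essentially trivial: if $d_i>\fk$, then $w_i=1/3$ precisely when $i$ is $\succ$-maximal among high-demand bidders and $w_i=0$ otherwise, and increasing $v_i$ only promotes $i$ in the $\succ$-order, so $w_i$ jumps once from $0$ to $1/3$.

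\textbf{Low-demand case.} Suppose $d_i\leq\fk$. Fix $\theta_{-i}$, list the low-demand bidders of $\theta_{-i}$ in $\succ$-decreasing order as $\pi_1\succ\pi_2\succ\cdots$, and set $C'_m:=\sum_{t=1}^{m} d_{\pi_t}$ and
\[
m^{\star} := \max\{\,m\geq 0 \;:\; C'_m+d_i<k\,\},
\]
which is well-defined since $d_i\leq\fk<k$. Let $s(v_i)$ be the number of $\pi_t$ with $\pi_t\succ i$; this count is weakly non-increasing in $v_i$. By tracking cumulative demands in the combined $\succ$-sort (cumulative demand equals $C'_j$ at positions $j\leq s(v_i)$ and equals $C'_{j-1}+d_i$ at positions $j>s(v_i)$), I will show that whenever $s(v_i)\leq m^{\star}$ the runner-up sits at position $m^{\star}+2$, the first $r-1=m^{\star}+1$ bidders are exactly $\{\pi_1,\dots,\pi_{m^{\star}},i\}$, and consequently $A=C'_{m^{\star}}+d_i$ is independent of $v_i$. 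Conversely, if $s(v_i)>m^{\star}$ then cumulative demand has already crossed $k$ by the time position $s(v_i)+1$ (the position of $i$) is reached, so $i$ is the runner-up or below and $w_i=0$. Combining,
\[
w_i\tp{d_i,v_i}{\theta_{-i}} \;=\; \frac{2\ck}{3(C'_{m^{\star}}+d_i)} \cdot \mathbf{1}\{s(v_i)\leq m^{\star}\},
\]
and since $s(v_i)$ is weakly non-increasing in $v_i$, the indicator can only switch from $0$ to $1$, yielding weak monotonicity.

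\textbf{Main obstacle.} The only substantive step is establishing that $A$ is invariant of $v_i$ once $i$ has entered the winning set; a priori, nudging $v_i$ upward reshuffles which bidders occupy positions $1,\dots,r-1$, which could change $A$ and thereby \emph{decrease} the winning probability $2\ck/(3A)$. The threshold $m^{\star}$ is designed precisely to capture this invariance: it depends only on $\theta_{-i}$ and $d_i$, and the cumulative-demand bookkeeping above pins the runner-up to the fixed index $m^{\star}+2$ and the top $r-1$ bidders to the fixed set $\{\pi_1,\dots,\pi_{m^{\star}},i\}$ for every $v_i$ with $s(v_i)\leq m^{\star}$. Once this is in hand, the monotonicity claim is immediate from the formula above.
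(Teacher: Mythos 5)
Your proof is correct and follows the same basic route as the paper's: split on whether $i$ is high- or low-demand and show that in each case the allocation probability is a step function with a single upward jump. The difference is one of rigor rather than of approach. The paper's own proof of this observation is a two-sentence informal argument (increasing the bid ``can help her to go over the \ppi{} of the runner-up bidder and win with probability $\frac{2\ck}{3A}$'') that silently assumes the jump height $\frac{2\ck}{3A}$ does not itself depend on $v_i$ --- exactly the issue you single out as the main obstacle, since if the winner set (and hence $A$) reshuffled as $v_i$ grows, the probability could in principle decrease. You resolve this by showing that whenever $i$ wins, the first $r-1$ positions are occupied by the fixed set $\{\pi_1,\dots,\pi_{m^\star}\}\cup\{i\}$, so $A=C'_{m^\star}+d_i$ is constant in $v_i$; your cumulative-demand bookkeeping for the two cases $s(v_i)\leq m^\star$ and $s(v_i)>m^\star$ checks out. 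The paper only establishes the equivalent fact later, in \cref{clm:ldb-curve} inside the proof of \cref{lem:expected-payments} (the step-function description of $w_i$ with threshold $d_i\cdot\ppi_r$), and even there the invariance of the set of the first $r-1$ bidders is left implicit. So your write-up is a sound, more self-contained version of the paper's argument rather than a genuinely different one.
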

\begin{proof}
If bidder $i$ is a high-demand bidder then clearly increasing her bid just increases her chance to be the high-demand bidder with the largest valuation and hence win with probability $1/3$. If bidder $i$ is a low-demand bidder then increasing her bid just increases her \ppi{} and hence can help her to go over the \ppi{} of the runner-up bidder and win with probability $\frac{2\lceil \frac{k}{2} \rceil}{3 \cdot A}$.
\qed
\end{proof}

The following lemma shows the expected payment of each winner.

\begin{lemma}
\label{lem:expected-payments}
The truthful expected payment of bidder $i$ ($p_i(\theta)$) calculated by \cref{cond:2} of \cref{thm:tf} is the following.
\begin{equation}
p_i(\theta) = 
\begin{cases}
0 & \ell + 1 <i\\
\frac{v_{\ell + 2}}{3} & i = \ell + 1 \\
0 & r \leq i \leq \ell\\
\frac{2\ck}{3A} \cdot d_i \cdot \ppi_r & 1 \leq i < r \\
\end{cases}
\end{equation}
\end{lemma}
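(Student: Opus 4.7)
The natural plan is to apply Myerson's integral (\cref{cond:2} of \cref{thm:tf}) separately in each of the four cases of the claimed formula. Throughout, fix bidder $i$ and $\theta_{-i}$, and analyze $w_i\tp{d_i,t}{\theta_{-i}}$ as a function of $t \in [0,v_i]$; the computation $p_i(\theta) = v_i w_i(\theta) - \int_0^{v_i} w_i\tp{d_i,t}{\theta_{-i}} dt$ then follows.

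For the two losing cases ($i>\ell+1$ and $r \leq i \leq \ell$), I will show $w_i\equiv 0$ on the whole interval, which immediately gives $p_i=0$. A high-demand bidder with $i>\ell+1$ has $v_i \leq v_{\ell+1}$, so lowering her bid keeps her beneath bidder $\ell+1$ in the high-demand ranking and she never wins. For a losing low-demand bidder with $r \leq i \leq \ell$, the key observation is that for any $t \leq v_i$, each of the bidders $1,\ldots,i-1$ has PPI at least $\ppi_i \geq t/d_i$ (with ties broken by index), so they all precede $i$ in the sorted order of $((d_i,t),\theta_{-i})$. Hence any bidder sitting at or before $i$ in the new order contributes cumulative demand at least $\sum_{j=1}^{i} d_j \geq \sum_{j=1}^{r} d_j \geq k$, placing $i$ at or past the runner-up position.

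For $i=\ell+1$, the function $w_{\ell+1}\tp{d_{\ell+1},t}{\theta_{-i}}$ is a step function equal to $1/3$ when $t \geq v_{\ell+2}$ (ties being broken in favor of $\ell+1$ since $\ell+1<\ell+2$) and $0$ otherwise. Myerson's formula then yields $p_{\ell+1}(\theta) = \tfrac{v_{\ell+1}}{3} - \tfrac{v_{\ell+1}-v_{\ell+2}}{3} = \tfrac{v_{\ell+2}}{3}$, matching the claim.

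The interesting case is $1 \leq i < r$. Set $t^* := d_i\cdot\ppi_r$. My plan is to show that $w_i\tp{d_i,t}{\theta_{-i}}$ equals $\tfrac{2\ck}{3A}$ for $t \in [t^*,v_i]$ and $0$ for $t<t^*$. The step function is then plugged into Myerson: $p_i(\theta) = v_i\cdot\tfrac{2\ck}{3A} - (v_i-t^*)\cdot\tfrac{2\ck}{3A} = t^*\cdot\tfrac{2\ck}{3A} = \tfrac{2\ck}{3A}\,d_i\,\ppi_r$, as required. The main obstacle is justifying that both $A$ and the runner-up's identity remain invariant as $t$ ranges over $[t^*,v_i]$: although bidder $i$'s internal position may slide, I will argue that bidder $i$'s new PPI lies in $[\ppi_r,\ppi_i]$, so she gets inserted somewhere between positions $i$ and $r-1$, while bidders $1,\ldots,r-1$ (all with PPI at least $\ppi_{r-1} \geq \ppi_r$) continue to occupy the top $r-1$ slots as an unordered set; in particular the cumulative demand over these slots is still $A$, and bidder $r$ (whose cumulative contribution first reaches $k$) is still the runner-up. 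For $t<t^*$, bidders $1,\ldots,r$ all strictly beat bidder $i$'s PPI of $t/d_i$, so their cumulative demand $A+d_r\geq k$ already exceeds $k$ and the runner-up lies among them; bidder $i$ is therefore past the runner-up position and $w_i=0$.
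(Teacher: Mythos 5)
Your proposal is correct and follows essentially the same route as the paper: a case analysis that characterizes $w_i\tp{d_i,t}{\theta_{-i}}$ as a step function in each of the four cases (in particular, reproving the paper's \cref{clm:ldb-curve} for $1 \leq i < r$, including the invariance of $A$ and of the runner-up as $t$ varies) and then evaluates Myerson's integral. The only cosmetic difference is that for the two zero-payment cases the paper simply invokes monotonicity of $w_i$ together with $w_i(\theta)=0$, whereas you argue the combinatorics directly.
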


\begin{proof}
Remember that the first $\ell$ bidders are low-demand bidders which have non-decreasing \ppi{}s, $r$ is the low-demand runner-up bidder, and finally among high-demand bidders, bidder $\ell + 1$ has the largest valuation and bidder $\ell + 2$ has the second largest valuation.

The probability of winning for bidder $i$ when $\ell + 1 <i$ is zero since she is a high-demand bidder who either does not have the highest valuation or has the highest valuation but has larger index number (see Definition \ref{def:valuable}). Because function $w_i$ is monotone we conclude that $w_i\tp{d_i, t}{\theta_{-i}}$ is equal to zero for any $t \leq v_i$. Hence by calculating the formula in \cref{cond:2} of \cref{thm:tf} we get $p_i(\theta) = 0$.

We calculate the truthful expected payment of bidder $\ell + 1$ by using the formula in \cref{cond:2} of \cref{thm:tf}.
\begin{align*}
\allowdisplaybreaks
p_{\ell + 1}(\theta) &= v_{\ell + 1} \cdot w_{\ell + 1}(\theta) - \int_{0}^{v_{\ell + 1}} w_{\ell + 1}\tp{d_{\ell + 1}, t}{\theta_{-{\ell + 1}}} dt \\
& = \frac{1}{3} v_{\ell + 1} - \int_{0}^{v_{\ell + 1}} w_{\ell + 1}\tp{d_{\ell + 1}, t}{\theta_{-{\ell + 1}}} dt \\
\displaybreak[2]
& = \frac{1}{3} v_{\ell + 1} - \int_{0}^{v_{\ell + 2}} w_{\ell + 1}\tp{d_{\ell + 1}, t}{\theta_{-{\ell + 1}}} dt - \int_{v_{\ell + 2}}^{v_{\ell + 1}} w_{\ell + 1}\tp{d_{\ell + 1}, t}{\theta_{-{\ell + 1}}} dt\\
\displaybreak[2]
& = \frac{1}{3} v_{\ell + 1} - \int_{v_{\ell + 2}}^{v_{\ell + 1}} w_{\ell + 1}\tp{d_{\ell + 1}, t}{\theta_{-{\ell + 1}}} dt\\
& = \frac{1}{3} v_{\ell + 1} - \frac{1}{3} (v_{\ell + 1} - v_{\ell + 2})\\
& = \frac{v_{\ell + 2}}{3}\\
\end{align*}
The first equality is \cref{cond:2} of \cref{thm:tf}, the second equality follows from the fact that probability of winning for bidder $\ell + 1$ ($w_{\ell + 1}(\theta)$) is $1/3$, the third one is breaking the domain of integration, the forth and fifth equalities are followed by noting that probability of winning for bidder $\ell + 1$ is zero if his valuation is less than $v_{\ell + 2}$ and is $1/3$ if his valuation is greater than or equal to $v_{\ell + 2}$.

The probability of winning for bidder $i$ when $r \leq i \leq \ell$ is zero since she is a low-demand bidder which has \ppi{} less than or equal to $\ppi_r$. Because function $w_i$ is monotone we conclude that $w_i\tp{d_i, t}{\theta_{-i}}$ is equal to zero for any $t \leq v_i$. Hence by calculating the formula in \cref{cond:2} of \cref{thm:tf} we get $p_i(\theta) = 0$.

The only part remaining is to show that $p_i(\theta) = \frac{2\ck}{3A} \cdot d_i \cdot \ppi_r$ for $1 \leq i < r$  using \cref{cond:2} of \cref{thm:tf}. In order to calculate $\int_{0}^{v_i} w_i\tp{d_i, t}{\theta_{-i}} dt$ we consider the curve of allocation function $w_i\tp{d_i, t}{\theta_{-i}}$ when $t$ increases from zero to $v_i$ (see \cref{fig:alloc-curve}). 
\begin{figure}[!h]
  \centering
      \includegraphics{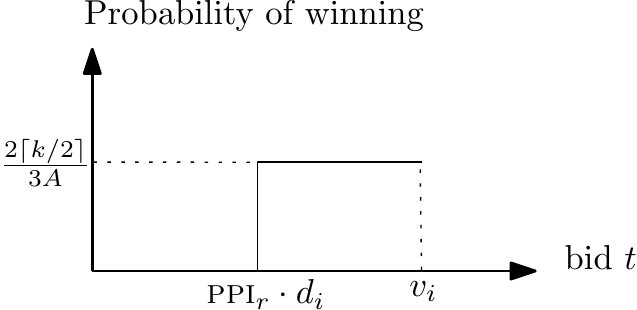}
  \caption{The horizontal axis represents the bid of bidder $i$ and the vertical axis shows the probability of bidder $i$ winning. As bidder $i$ increases her bid; at the point when her \ppi{} is equal to the \ppi{} of the runner-up bidder she gets allocated with probability $\frac{2\ck}{3A}$.}
  \label{fig:alloc-curve}
\end{figure}
\begin{observation}
\label{clm:ldb-curve}
For any bidder $1 \leq i < r$ allocation function $w_i\tp{d_i, t}{\theta_{-i}}$ is equal to zero when $t < d_i \cdot \ppi_r$ and is equal to $\frac{2\ck}{3A}$ when $t \geq d_i \cdot \ppi_r$.
\end{observation}
\begin{proof}
Remember the runner-up bidder $r$ has the smallest index for which $\sum_{j = 1}^r d_j \geq k$. Mechanism $\M$ allocates all the low-demand bidders which have index less than $r$ (have \ppi{}s greater than or equal to the runner-up bidder) with probability $\frac{2\ck}{3A}$. Therefore as far as $t \geq d_i \cdot \ppi_r$ bidder $i$ is more valuable than the runner-up bidder (see Definition \ref{def:valuable}) and wins with probability $\frac{2\ck}{3A}$ in type profile $\tp{d_i, t}{\theta_{-i}}$.

Now assume that $t < d_i \cdot \ppi_r$ and $\theta' = \tp{d_i, t}{\theta_{-i}}$. Our objective is to show that the probability of bidder $i$ winning is zero for type profile $\theta'$ and hence finish the proof of the observation. Note that $\sum_{j = 1}^r d_j \geq k$ and in the new type profile $\theta'$ bidder $i$ has \ppi{} less than the $\ppi$s of all bidders $j \in [r]$ where $j \neq i$ since $t < d_i \cdot \ppi_r$. In other words, bidder $i$ is the least valuable bidder in $[r]$ (see Definition \ref{def:valuable}) while $\sum_{j = 1}^r d_j \geq k$. Therefore, bidder $i$ is either the runner-up bidder in $\theta'$ or has \ppi{} less than the runner-up bidder (see  \cref{def:runner-up}). Hence has zero probability of winning.
\qed
\end{proof}
The following equalities  shows the expected payment of bidder $i$ for $1 \leq i < r$.
\begin{align*}
\allowdisplaybreaks
p_i(\theta) &= v_i \cdot w_i(\theta) - \int_{0}^{v_i} w_i\tp{d_i, t}{\theta_{-i}} dt \\
& = \frac{2\ck}{3A} v_i - \int_{0}^{v_i} w_i\tp{d_i, t}{\theta_{-i}} dt \\
& = \frac{2\ck}{3A} v_i - \int_{0}^{d_i \cdot \ppi_r} w_i\tp{d_i, t}{\theta_{-i}} dt - \int_{d_i \cdot \ppi_r}^{v_i} w_i\tp{d_i, t}{\theta_{-i}} dt\\
& = \frac{2\ck}{3A} v_i - \int_{d_i \cdot \ppi_r}^{v_i} w_i\tp{d_i, t}{\theta_{-i}} dt\displaybreak[2]\\
& = \frac{2\ck}{3A} v_i - \frac{2\ck}{3A} (v_i - d_i \cdot \ppi_r)\\
& = \frac{2\ck}{3A} (d_i \cdot \ppi_r)\\
\end{align*}
The first equality is \cref{cond:2} of \cref{thm:tf}, the second equality follows from the fact that probability of winning for bidder $i$ ($w_i(\theta)$) is $\frac{2\ck}{3A}$, the third one is breaking the domain of integration, the forth and fifth equalities are followed from \cref{clm:ldb-curve}.
\qed
\end{proof}

Let $\rev(\M, \theta)$ denotes the expected revenue of mechanism $\M$ in type profile $\theta$. We prove the following.

\begin{align}
\rev(\M, \theta) &= \sum_{i = 1}^n p_i(\theta) & \text{definiton of \rev{}} \notag\\
&= \sum_{i = 1}^{r-1} p_i(\theta) + \sum_{i = r}^{\ell} p_i(\theta) + p_{\ell + 1}(\theta) + \sum_{i = \ell + 2}^{n} p_i(\theta)&\notag\\
&= \sum_{i = 1}^{r-1} \frac{2\ck}{3A} (d_i \cdot \ppi_r) + \frac{v_{\ell + 2}}{3} & \text{Lemma \ref{lem:expected-payments}}\notag\\
&= \frac{2\ck}{3A}\cdot \ppi_r \cdot \sum_{i = 1}^{r-1} d_i + \frac{v_{\ell + 2}}{3} & \notag\\
&= \frac{2\ck}{3}\cdot \ppi_r + \frac{v_{\ell + 2}}{3} & \text{as $A = \sum_{i = 1}^{r-1} d_i$}\label{eqn:expected-rev}
\end{align}

The following lemma proves that $\M$ is revenue monotone.
\begin{lemma}
The expected revenue of mechanism $\M$ does not decrease if we add one more bidder or a bidder increases her bid.
\end{lemma}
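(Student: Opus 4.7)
The plan is to read off the closed-form identity~\cref{eqn:expected-rev},
\[
\rev(\M, \theta) \;=\; \frac{2\ck}{3}\cdot \ppi_r \;+\; \frac{v_{\ell + 2}}{3},
\]
and reduce the lemma to showing that the two instance-dependent quantities $\ppi_r$ and $v_{\ell+2}$ are each weakly monotone non-decreasing under the two allowed perturbations (adding a new bidder, or a single existing bidder raising her bid). Since $\ck$ and the coefficients $1/3,\,2/3$ do not depend on the type profile, monotonicity of both terms transfers immediately to $\rev(\M,\theta)$. A convenient first observation is that the classification of a bidder as low- or high-demand is fixed by $d_i$ and is therefore unaffected by either operation, so each perturbation touches either the high-demand valuations (affecting only $v_{\ell+2}$) or the low-demand $\ppi$'s (affecting only $\ppi_r$).

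The quantity $v_{\ell+2}$ is the second-largest element of the multiset of valuations of the high-demand bidders, with the index tie-break of \cref{def:valuable}. Adjoining a new element to a multiset or raising a single element can only (weakly) increase its second-largest element, so both operations leave $v_{\ell+2}$ non-decreasing, and operations on low-demand bidders leave the multiset untouched. The main work is monotonicity of $\ppi_r$, for which the plan is to rewrite it as a threshold. Define
\[
g(p) \;=\; \sum\{\, d_i : i \text{ is low-demand and } \ppi_i > p \,\}.
\]
Then $g$ is weakly decreasing in $p$, and using \cref{def:runner-up} together with the sorted ordering of the low-demand bidders one verifies that $g(\ppi_r) < k$ while $g(p) \geq k$ for every $p < \ppi_r$, so that
\[
\ppi_r \;=\; \inf\{\, p : g(p) < k \,\}.
\]

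The key claim is then that each admissible perturbation makes $g$ \emph{pointwise} non-decreasing. Inserting a new low-demand bidder with ppi $p_0$ and demand $d$ adds $d$ to $g(p)$ for every $p < p_0$ and leaves $g$ unchanged for $p \geq p_0$; raising a low-demand bidder $i$'s bid lifts her ppi from $p_i$ to some $p_i' > p_i$ and adds $d_i$ to $g(p)$ on $[p_i, p_i')$; and perturbations of high-demand bidders do not enter the definition of $g$. Pointwise growth of $g$ shrinks the sublevel set $\{p : g(p) < k\}$, so its infimum can only rise, giving new $\ppi_r \geq $ old $\ppi_r$, and combining the two monotonicities with the revenue identity finishes the proof. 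The step I expect to require the most care is the tie-breaking bookkeeping: one must check that the strict inequality ``$\ppi_i > p$'' in the definition of $g$ is what matches $\inf\{p:g(p)<k\}$ with $\ppi_r$, and that low-demand bidders with ppi exactly $\ppi_r$ but smaller index than $r$ do not push $g(\ppi_r)$ up to $k$; the standing dummy low-demand bidders of ppi $0$ are harmless since they only affect $g(p)$ at $p<0$.
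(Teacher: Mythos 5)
Your proposal is correct and follows the same route as the paper: both read off the closed-form revenue $\frac{2\ck}{3}\cdot \ppi_r + \frac{v_{\ell+2}}{3}$ from \cref{eqn:expected-rev} and reduce the lemma to the weak monotonicity of $v_{\ell+2}$ and $\ppi_r$ under the two perturbations. The only difference is one of rigor, in your favor: the paper simply asserts that $\ppi_r$ cannot decrease, whereas your sublevel-set characterization $\ppi_r = \inf\{p : g(p) < k\}$ together with the pointwise growth of $g$ supplies the verification (including the tie-breaking bookkeeping) that the paper leaves implicit.
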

\begin{proof}
The expected revenue of mechanism $\M$ is $\frac{2\ck}{3}\cdot \ppi_r + \frac{v_{\ell + 2}}{3}$ by \cref{eqn:expected-rev}. Remember that $v_{\ell + 2}$ is the high-demand bidder with the second highest valuation, therefore, if we add one more bidder or a bidder increases her bid this value does not decrease. On the other hand $\ppi_r$ is the \ppi{} of the runner-up bidder (see \cref{def:runner-up}) which is the \ppi{} of the first low-demand bidder that crosses value $k$ (see \cref{fig:runner-up}) where the bidders are sorted according to their \ppi{}s. The proof of the lemma follows by the fact that $\ppi_r$ also does not decrease as we add one more bidder or a bidder increases her bid.
\qed
\end{proof}

In the following lemma we prove that Price of Revenue Monotoncity (\porm) of $\M$ is $3$ and  finish this section.
\begin{lemma}
\label{lm:porm}
$\porm(\M) = 3$.
\end{lemma}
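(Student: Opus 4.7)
The plan is to show both directions, $\porm(\M)\le 3$ and $\porm(\M)\ge 3$. I would first read off the expected social welfare of $\M$ in the same way the expected revenue was computed in \cref{eqn:expected-rev}: with probability $1/3$ the winner is the best high-demand bidder $\ell+1$ (contributing $v_{\ell+1}$), and with probability $2/3$ each low-demand bidder $i<r$ is selected independently with probability $\ck/A$ (contributing $v_i$). Thus
\[
\mathbb{E}[\wf(\M,\theta)] \;=\; \tfrac{1}{3}\,v_{\ell+1} \;+\; \tfrac{2}{3}\cdot\frac{\ck}{A}\sum_{i=1}^{r-1} v_i.
\]
The goal is then to compare this against the optimal welfare $\mathrm{OPT}(\theta)$.

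The next step is to upper bound $\mathrm{OPT}(\theta)$. The key structural observation is that any feasible allocation contains \emph{at most one} high-demand bidder: two bidders each of demand $>\fk$ would together exceed $k$. Hence $\mathrm{OPT}(\theta)\le v_{\ell+1}+\mathrm{OPT}_{\mathrm{low}}(\theta)$, where $\mathrm{OPT}_{\mathrm{low}}$ is the optimal welfare restricted to low-demand bidders with full capacity $k$. To control $\mathrm{OPT}_{\mathrm{low}}$, I would use the fractional knapsack bound: sorting low-demand bidders by \ppi{} (which is exactly the order used in defining the runner-up), the LP relaxation gives $\mathrm{OPT}_{\mathrm{low}}\le \sum_{i=1}^{r-1} v_i + (k-A)\,\ppi_r$. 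Since $v_i=d_i\,\ppi_i\ge d_i\,\ppi_r$ for every $i<r$, summing yields $\sum_{i=1}^{r-1} v_i\ge A\,\ppi_r$, and plugging back gives the clean bound
\[
\mathrm{OPT}_{\mathrm{low}}(\theta)\;\le\;\sum_{i=1}^{r-1}v_i + \frac{k-A}{A}\sum_{i=1}^{r-1}v_i \;=\;\frac{k}{A}\sum_{i=1}^{r-1}v_i.
\]

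Combining these, and using the elementary inequality $2\ck\ge k$ from the observation $\ck\le A$ (more precisely, $\ck\ge k/2$), I would conclude
\[
\mathbb{E}[\wf(\M,\theta)] \;\ge\; \tfrac{1}{3}v_{\ell+1} + \tfrac{1}{3}\cdot\frac{k}{A}\sum_{i=1}^{r-1}v_i \;\ge\; \tfrac{1}{3}\bigl(v_{\ell+1}+\mathrm{OPT}_{\mathrm{low}}\bigr)\;\ge\;\tfrac{1}{3}\,\mathrm{OPT}(\theta),
\]
which gives $\porm(\M)\le 3$. For tightness, I would exhibit a single high-demand bidder with demand $k$ and valuation $V$ (the low-demand side is padded by the zero-value dummies from the preliminaries). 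Here $\mathrm{OPT}=V$ but $\M$ picks this bidder only with probability $1/3$, yielding expected welfare exactly $V/3$, so $\porm(\M)\ge 3$. The main obstacle I anticipate is the clean passage from $\mathrm{OPT}_{\mathrm{low}}$ to a multiple of $\sum_{i<r}v_i$; everything else is arithmetic on the expected-welfare formula, and the fractional-knapsack bound combined with the $v_i\ge d_i\,\ppi_r$ comparison is exactly the right handle to eliminate the additive $(k-A)\,\ppi_r$ term.
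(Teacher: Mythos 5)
Your proof is correct and follows essentially the same route as the paper: the same expected-welfare formula, the same decomposition of the optimum into at most one high-demand bidder (bounded by $v_{\ell+1}$) plus a low-demand part bounded by $\frac{k}{A}\sum_{i=1}^{r-1}v_i$, and the same final arithmetic using $2\ck\ge k$; your fractional-knapsack derivation is just a more explicit justification of the paper's inequality \cref{eq:porm-3}. Your tightness instance (a single high-demand bidder against zero-value dummies) is a welcome addition, since the paper's proof only establishes the upper-bound direction $\porm(\M)\le 3$ despite the lemma asserting equality.
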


\begin{proof} 
We prove the lemma by showing that the expected social welfare of $\M$ in type profile $\theta$ is at least $\frac{1}{3}$ of the maximum social welfare ($\wf(\theta)$). Let $\Sy$ be an arbitrary subset of bidders which VCG selects and realizes the maximum social welfare $\wf(\theta)$. Let $L$ be the sum of valuations of low-demand bidders in $\Sy$ and $H$ be the sum of valuations of high-demand bidders in $\Sy$. Therefore, we have:
\begin{equation}
\label{eq:porm-1}
\wf(\theta) = L + H
\end{equation}
There can be at most one high-demand bidders in $\Sy$ since they have demand more than $\fk$. As $v_{\ell + 1}$ is the high-demand bidder with the largest valuation we have the following.
\begin{equation}
\label{eq:porm-2}
v_{\ell + 1} \geq H
\end{equation}
We also have 
\begin{equation}
\label{eq:porm-3}
\sum_{ i = 1}^{r-1}v_i \geq L \cdot \frac{A}{k}
\end{equation}
since $A = \sum_{i = 1}^{r-1} d_i$ and the first $r-1$ bidders have larger \ppi{}s than the rest as they are sorted non-increasingly according to their \ppi{}s.

Remember that $\M$ selects bidder $\ell + 1$ with probability $1/3$ or selects each of the first $r-1$ low-demand bidders with probability $\frac{2\ck}{3A}$.
The following equalities finishes the proof of the lemma.
\begin{align*}
E[\wf(\M, \theta)] &= \sum_{i = 1}^{r-1} \frac{2\ck}{3A} \cdot v_i + \frac{1}{3} v_{\ell + 1} & \text{definition of expected social welfare} \\
&= \frac{2\ck}{3A} \cdot \sum_{i = 1}^{r-1} v_i + \frac{1}{3} v_{\ell + 1} \\
&\geq \frac{2\ck}{3A} \cdot L \cdot \frac{A}{K} + \frac{1}{3} H & \text{by \cref{eq:porm-2} and \cref{eq:porm-3}}\\
&\geq \frac{1}{3} L + \frac{1}{3} H & \text{algebra}\\
&= \frac{1}{3} \wf(\theta) & \text{by \cref{eq:porm-1}}\\
\end{align*}
\end{proof}

\bibliographystyle{plainnat}
\bibliography{refs}

\newpage
\appendix
\newcommand{\tpg}[2]{\left(d^{(#1)}_{#2},v^{(#1)}_{#2}\right)}
\newcommand{\ppig}[2]{\ppi^{(#1)}_{#2}}
\newcommand{\iv}[2]{u^{(#1)}_{#2}}
\newcommand{\ival}[1]{$#1$th-item value}
\newcommand{\vv}[2]{v^{(#1)}_{#2}}
\newcommand{\dd}[2]{d^{(#1)}_{#2}}
\newcommand{\mmca}[1]{\mbox{\sc mmca}}
\section{Multigroup Combinatorial Auction with Identical Items}
\label{sec:mcaii}
In this section we describe our Mechanism for Multigroup Combinatorial Auction with identical items (\mmca{}). We prove that \mmca{} satisfies IC and RM while has \porm{} of at most $O(\log k)$.

In the following we define required notations to be used throughout this section.  Let's assume we have $m$ groups $G^{(1)}, \ldots, G^{(m)}$. We always show the group index of any variable within parenthesis in superscript. For group $G^{(g)}$ let $\left(D^{(g)},V^{(g)}\right)$ be the type of a high-demand bidder with the highest valuation, $\tpg{g}{1}, \tpg{g}{2}, \ldots, \tpg{g}{n^{(g)}}$ be the types of low-demand bidders, and $\ppig{g}{i}$ be the \ppi{} of $i$th low-demand bidder. Without loss of generality, for each group $G^{(g)}$ we assume that the number of low-demand bidders $n^{(g)}$ is larger than $k$ and $\ppig{g}{1} \geq \ppig{g}{2} \geq \ldots\geq \ppig{g}{n^{(g)}}$.  
The following defines the maximum price per item with which we can sell at least $j$ items to the low-demand bidders of group $G ^{(g)}$.
\begin{definition}[\ival{j}]
\label{def:ival}
For each $j \in [k]$ we define {\em \ival{j}} ($\iv{g}{j}$) of group $G^{(g)}$ to be equal to the valuation of $i$th low-demand bidder $\vv{g}{i}$ where $i$ is the minimum number for which $\sum_{t = 1}^i \dd{g}{t}$ is greater than or equal to $j$ (see \cref{fig:itemvalue}).
\end{definition}

\begin{figure}[!h]
  \centering
      \includegraphics{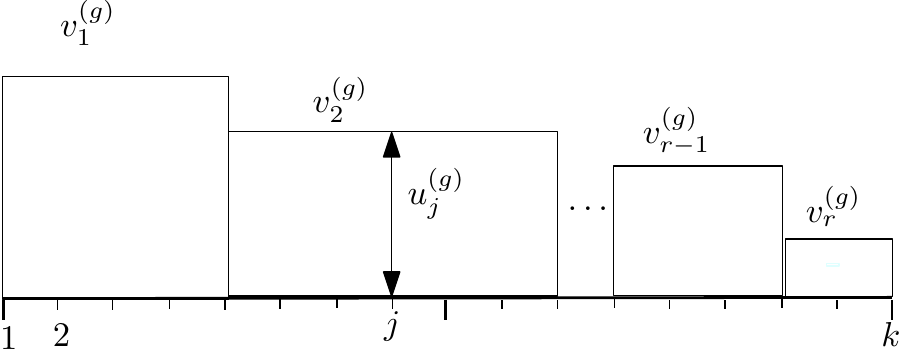}
  \caption{Each rectangle corresponds to a low-demand bidder of group $G^{(g)}$ where the height, width, and area represent \ppi{}, demand, and valuation of the bidder respectively. Here bidders are sorted according to their \ppi{}s and the valuation of the bidder who crosses item $j$ is the $j$th item-value.}
  \label{fig:itemvalue}
\end{figure}

We are interested in assigning a value to each group which represents how much revenue can be obtained if we give the items to bidders of the group. Then, we give the items to bidders of a group with the highest assigned value.

\begin{defi}[$\mprg^{(g)}$] The Maximum Possible Revenue of Group $G^{(g)}$ ($\mprg^{(g)}$) is equal to 
\[
\max(V^{(g)}, \max_{j \in [\ck]} j \cdot \iv{g}{j}).
\]
\end{defi} 

The intuition for $\mprg^{(g)}$ is the following. The value $j \cdot \iv{g}{j}$ is the maximum revenue we can obtain if we sell exactly $j$ items to low-demand bidders of group $G^{(g)}$, see Definition \ref{def:ival}. Note that number $j$ is taken from the set $[\ck]$ meaning that we consider selling at most $\ck$ items. This is because low-demand bidders have different demands from the range $[1..\ck]$, therefore, we can guarantee selling at most $\ck$ items without overselling the $k$ items. In fact with randomization we make sure that we sell exactly $\ck$ items in expectation. Finally we take the maximum of the highest valuation of high-demand bidders ($V^{(g)}$) and value $\max_{j \in [\ck]} j \cdot \iv{g}{j}$.

The rest of this section is organized as follow.
First, we describe the allocation function of \mmca{}. Second, we use Lemma \ref{lem:expected-payments} to derive the expected payments of the winners which determines the revenue of \mmca{}. Third, we show that the revenue does not decrease if we add one more bidder or a bidder increases her bid. Finally, we show that \porm{} of \mmca{} is at most $O(\log k)$.

Let $G^{g^*}$ be the group with highest $\mprg{}$, $g^* = \arg \max_g \mprg^{(g)}$, and $G^{\hat{g}}$ be the group with the second highest $\mprg{}$. Mechanism \mmca{} selects the winners from group $G^{(g^*)}$. Let $R$ be equal to $\mprg^{\hat{g}}$. We think of $R$ as a reserved value such that we must obtain at least $R$ revenue from group $G^{(g^*)}$. In the rest of this section all the discussions are about group $G^{(g^*)}$ unless mentioned otherwise, henceforth, we drop the group identifiers from variables.

Similar to Section \ref{sec:prelim}, in the following, we define {\em runner-up} bidder to be the low-demand bidder with highest \ppi{} which cannot be a winner if we sort bidders by their \ppi{}s (see \cref{fig:runner-up-g}).

\begin{figure}[!h]
  \centering
      \includegraphics{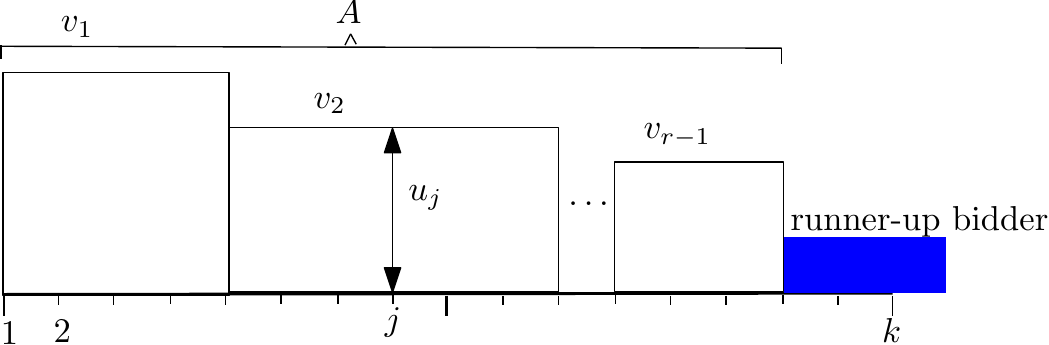}
  \caption{Each rectangle corresponds to a low-demand bidder of group $G^{(g^*)}$ where the height, width, and area represent \ppi{}, demand, and valuation of the bidder respectively. Here bidders are sorted according to their \ppi{}s and the low-demand bidder whose demand crosses $k$th item is the runner-up bidder.}
  \label{fig:runner-up-g}
\end{figure}

\begin{definition}[runner-up bidder]
\label{def:runnerg}
We call low-demand bidder $r$ the runner-up bidder if $r$ is the smallest number in set $[n]$ for which 
\[\sum_{i = 1}^{r} d_{i} \geq k.\]
\end{definition}

Remember that $R$ is equal to the second largest \mprg{} of groups, $V$ is the largest valuation of high-demand bidders of the winning group $G^{(g^*)}$, and $u_j$ is largest price of an item for which we can sell at least $j$ items to low-demand bidders of $G^{(g^*)}$ (see Definition \ref{def:ival}).  Note that as $\mprg = \max(V, \max_{j \in [\ck]} j \cdot u_j)$ of $G^{(g^*)}$ is larger than $R$ we have either: (1) $V > \max(R, \max_{j \in [\ck]} j \cdot u_j)$ or (2) $\max_{j \in [\ck]} j \cdot u_j \geq \max(R, V)$. 

Case (1) is easy, if $V > \max(R, \max_{j \in [\ck]} j \cdot u_j)$ then the set of winners contains only a high-demand bidder with valuation $V$ breaking the ties arbitrarily. In Case(2) let $j^*$ be the largest number from $[\ck]$ such that $j^* \cdot u_{j^*} \geq \max(R, V)$. If $j^*$ is less than $\ck$ then the set of winners is all low-demand bidders whose \ppi{} is greater than or equal to $u_{j^*}$. Otherwise if $j^*$ is equal to $\ck$ then we need to include all the low-demand bidders whose \ppi{} is  larger than or equal to $\max(R, V)/\ck$ since, roughly speaking, their \ppi{} is high enough to win against both high-demand bidders and the group with second highest \mprg{}. Note that the sum of demands of such bidders might exceed $\ck$, therefore, we need randomization to guarantee selling exactly $\ck$ items in expectation.

\begin{definition}[$a$ and $A$]
\label{def:ag}
If $ \ck \cdot u_{\ck} \geq \max(R, V)$, we define $a$ to be the largest number in $[r-1]$ such that $\ppi_a \geq \max(R, V)/\ck$, \ie, number $a$ is the smallest index in the set of all low-demand bidders that have index greater than the runner-up bidder and have $\ppi$ larger than or equal to $u_{\ck}$. We also define number $A$ to be the sum of demands of the first $a$ low-demand bidders, \ie, $A = \sum_{i = 1}^{p} \dd{g}{i}$. 
\end{definition}

Now we are ready to formally define the allocation function of \mmca{}.

\begin{definition}[Allocation Function of \mmca{}]
\label{def:mmcaalloc}
\begin{enumerate}
\item \label{cs:hd} If $V > \max(R, \max_{j \in [\ck]} j \cdot u_j)$ then the set of winners contains only a high-demand bidder with valuation $V$ breaking the ties arbitrarily.
\item \label{cs:lds} If $j^*$ is less than $\ck$ then the set of winners is all low-demand bidders whose \ppi{} is greater than or equal to $u_{j^*}$.
\item \label{cs:ldl} If $j^*$ is equal to $\ck$ then each of the first $a$ low-demand bidder wins with probability $\frac{\ck}{A}$ independently.
\end{enumerate}
\end{definition}

In the following lemma we calculate the critical values of winners using \cref{thm:tf}.

\begin{lemma}
\label{lm:mmcacv}
The critical values of winners are the following considering different conditions of Definition \ref{def:mmcaalloc}.
\begin{enumerate}
\item If \cref{cs:hd} happens then the critical value of the winner is $\max(R, \max_{j \in [\ck]} j \cdot u_j, V_2)$ where $V_2$ is the second highest valuation of high-demand bidders.
\item If \cref{cs:lds} happens then the critical value of each winner $i$ is $d_i \cdot (\max(R, V)/j^*)$.
\item If \cref{cs:ldl} happens then the critical value of each winner $i$ is $\frac{\ck}{A} \cdot d_i \cdot  \max(\ppi_r, (\max(R, V)/\ck))$. 
\end{enumerate}
\end{lemma}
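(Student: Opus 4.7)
My plan is to apply \cref{thm:tf}. The theorem says that in any IC monotone single-parameter mechanism the truthful expected payment is $v_i\cdot w_i(\theta)-\int_0^{v_i} w_i((d_i,t),\theta_{-i})\,dt$. For each winner $i$ in each of the three cases of \cref{def:mmcaalloc} I will describe the allocation curve $t\mapsto w_i((d_i,t),\theta_{-i})$; I expect it to be a step function that jumps from $0$ to its positive value $w_i(\theta)\in\{1,\ck/A\}$ at a single threshold $t^{*}$, so that the payment evaluates to $w_i(\theta)\cdot t^{*}$, matching the formula stated in the lemma once $t^{*}$ is identified.

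For Case~\ref{cs:hd} the winner is the top high-demand bidder of group $G^{(g^*)}$ with valuation $V$. As she lowers her bid to $t$, she continues to win iff three conditions simultaneously hold in the perturbed profile: (a) her group still attains the top \mprg{}, i.e.\ the larger of $t$ and $\max_{j\in[\ck]} j u_j$ is at least $R$; (b) the high-demand branch still fires, i.e.\ $t>\max_{j\in[\ck]} j u_j$; and (c) she remains the top high-demand bidder of her group, i.e.\ $t>V_2$. None of $R$, the $u_j$'s, or $V_2$ depends on her bid, so the allocation jumps at $t^{*}=\max(R,\max_{j\in[\ck]} j u_j, V_2)$, giving the stated payment. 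Case~\ref{cs:ldl} is structurally parallel to the low-demand analysis in the proof of \cref{lem:expected-payments}: bidder $i$ wins with probability exactly $\ck/A$ as long as her \ppi{} exceeds both $\ppi_r$ (keeping her in the prefix above the runner-up) and $\max(R,V)/\ck$ (keeping the witness $\ck\cdot u_\ck\ge\max(R,V)$, so Condition~\ref{cs:ldl} keeps firing and the group keeps winning). The binding \ppi-threshold is the larger of the two, and multiplying by $d_i$ and by the winning probability $\ck/A$ yields the stated payment.

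The main obstacle is Case~\ref{cs:lds}, because $j^{*}$ and $u_{j^{*}}$ are themselves defined in terms of the sorted \ppi-list, which shifts when bidder $i$'s own bid moves. My plan is to argue both sides of the jump at $t^{*}=d_i\max(R,V)/j^{*}$. For $t>t^{*}$, bidder $i$'s \ppi{} is $>\max(R,V)/j^{*}$, so the original prefix that witnessed $j^{*}u_{j^{*}}\ge\max(R,V)$ is unaffected by the perturbation and Condition~\ref{cs:lds} still fires with $i$ in the winning set. For $t<t^{*}$ her \ppi{} drops strictly below $\max(R,V)/j^{*}$; using the maximality of $j^{*}$ (as the largest index in $[\ck]$ attaining $j\cdot u_j\ge\max(R,V)$), no $j>j^{*}$ can compensate, and one checks that once $i$ drops out of the relevant prefix every $j\le j^{*}$ also fails to restore the inequality, so the group's \mprg{} falls strictly below $\max(R,V)$. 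Depending on whether $R$ or $V$ was tight in that maximum, either the group loses to the second-best group or the high-demand branch takes over inside the group; in both sub-cases $i$ loses. Scaling her critical \ppi{} by $d_i$ yields the claimed $d_i\cdot\max(R,V)/j^{*}$ and completes the case analysis.
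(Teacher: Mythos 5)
Your overall route is the same as the paper's: invoke \cref{thm:tf}, argue that each winner's allocation curve is a step function jumping at a single threshold, and identify that threshold separately for the three conditions of \cref{def:mmcaalloc}. Your treatment of \cref{cs:hd} and \cref{cs:ldl} matches the paper's essentially verbatim, and you are right to single out \cref{cs:lds} as the delicate case, since $j^*$ and $u_{j^*}$ move with bidder $i$'s own bid.

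However, the two justifications you give for \cref{cs:lds} are not correct as stated, even though the threshold you are defending is. On the upper side, the ``original prefix that witnessed $j^* u_{j^*}\ge\max(R,V)$'' is \emph{not} unaffected when bidder $i$ lowers her bid to some $t$ with $t^*<t<v_i$: her \ppi{} can drop below $u_{j^*}$, the sorted order reshuffles, and every $u_j$ weakly decreases. What actually saves the claim is the maximality of $j^*$: if the other winners alone covered $j^*$ items at some price $q>\max(R,V)/j^*$, then adding back $i$'s demand would cover $\min(j^*+d_i,\ck)$ items at price $q$, giving an index in $[\ck]$ larger than $j^*$ (or the index $\ck$ itself) whose product with its item-value exceeds $\max(R,V)$ --- contradicting the choice of $j^*$ in \cref{cs:lds}. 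Hence $u_{j^*}$ recomputed without $i$ is at most $\max(R,V)/j^*$, so for every $t>t^*$ the new $j^*$ coincides with the old one and $i$ stays in the winning prefix; this is the missing step. On the lower side, your claim that once $i$ drops out ``every $j\le j^*$ also fails to restore the inequality, so the group's \mprg{} falls strictly below $\max(R,V)$'' is false: a shorter prefix consisting of the \emph{other} high-\ppi{} bidders can still satisfy $j\cdot u_j\ge\max(R,V)$ for some $j<j^*$, in which case \cref{cs:lds} still fires for the group and bidder $i$ loses simply because she is no longer in that prefix, not because the group loses or the high-demand branch takes over. The paper's own proof of this case makes the same overstatement, but since you flagged \cref{cs:lds} as the main obstacle, these are exactly the two points your argument needs to nail down.
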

\begin{proof}
We consider all the three conditions separately.

{\bf \cref{cs:hd}:}  We show that if the valuation $V$ of the winner goes below $\max(R, \max_{j \in [\ck]} j \cdot u_j, V_2)$ then player $V$ cannot be a winner. If the value of $\max(R, \max_{j \in [\ck]} j \cdot u_j, V_2)$ is equal to $R$ then decreasing $V$ to a value less than $R$ causes a group with the second highest \mprg{} win and hence changes player $V$ to a loser participant. If it is equal to $\max_{j \in [\ck]} j \cdot u_j$ then decreasing $V$ to a value less than $\max_{j \in [\ck]} j \cdot u_j$ causes low-demand bidders win and hence changes player $V$ to a loser participant. If it is equal to $V_2$ then decreasing $V$ to a value less than $V_2$ causes the high-demand bidder with valuation $V_2$ win and hence changes player $V$ to a loser participant.

{\bf \cref{cs:lds}:} Remember that $j^*$ is the largest number from $[\ck]$ such that $j^* \cdot u_{j^*} \geq \max(R, V)$. Therefore if the valuation of any winner $i$ goes below $d_i \cdot (\max(R, V)/j^*)$ then there exist no $j$ in $[\ck]$ such that $j \cdot u_{j}$ is larger than $\max(R, V)$. Hence the winning set changes to either another group if $\max(R, V) = R$ or to high-demand bidders if $\max(R, V) = V$.

{\bf \cref{cs:ldl}:} In this case we show that if the valuation of winner $i$ goes below $d_i \cdot  \max(\ppi_r, (\max(R, V)/\ck))$ then she has zero probability of winning and if it is larger than or equal to $d_i \cdot  \max(\ppi_r, (\max(R, V)/\ck))$ then she has $\frac{\ck}{A}$ probability of winning. If the valuation of $i$ is more than $d_i \cdot  \max(\ppi_r, (\max(R, V)/\ck))$ then by the way we define the allocation function (see Definition \ref{def:mmcaalloc}) he has probability $\frac{\ck}{A}$ of winning otherwise her \ppi{} is less than $\max(\ppi_r, (\max(R, V)/\ck))$. If \ppi{} of $i$ is less than $\ppi_r$ (the \ppi{} of the runner-up bidder) then she cannot be a winner since the sum of demands of participants who have higher $\ppi$ than her is larger than $k$. If \ppi{} of $i$ is less than $\max(R, V)/\ck$ then she cannot win because of the way we select winners in the allocation function (\cref{cs:ldl} of Definition \ref{def:mmcaalloc}).
\qed
\end{proof}

Now we prove that \mmca{} satisfies RM.

\begin{lemma}
If we add one more bidder or a bidder increases her bid the revenue of \mmca{} does not decrease.
\end{lemma}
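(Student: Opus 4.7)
The proof splits into three ingredients. First, I would show that $\mprg^{(g)}$ is non-decreasing under adding a bidder to $g$ or raising a bid in $g$: $V^{(g)}$ is clearly monotone, and each item-value $u_j^{(g)}$, being the \ppi{} at which cumulative demand first reaches $j$ in the \ppi{}-sorted list, is also monotone by the same argument used for $\ppi_r$ in \cref{sec:rcaii}. Second, I would prove the sandwich $R\le\rev(g^*,R)\le\mprg^{(g^*)}$, where $\rev(g^*,R)$ denotes the expected revenue extracted from the winning group $g^*$ against reserve $R$. Third, I would establish that $\rev(g,R)$ is separately non-decreasing in $R$ with $g$'s bids fixed and in $g$'s bids with $R$ fixed.

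For the sandwich, plug the critical values of \cref{lm:mmcacv} into the expected revenue in each case of \cref{def:mmcaalloc}. Case hd yields $\max(R,\max_j j u_j, V_2)$. Case lds, using that $j^*$ is the largest $j\in[\ck]$ with $j\cdot u_j\ge\max(R,V)$ and that the demands are integers (so $j^*<\ck$ forces $\sum_{t=1}^{i^*}d_t=j^*$ exactly, where $i^*$ is the index realizing $u_{j^*}$), yields $\max(R,V)$. Case ldl telescopes to $\ck\cdot\max(\ppi_r,\max(R,V)/\ck)$. The lower bound by $R$ is then immediate in all three. For the upper bound: in hd use $V\ge V_2$ and $V>\max_j j u_j$; in lds split on whether $\max(R,V)$ equals $R$ (then $R\le\mprg^{(g^*)}$ since $g^*$ has the top $\mprg$) or $V$ (then $V\le\max_j j u_j\le\mprg^{(g^*)}$ since we are not in hd); in ldl bound $\ck\cdot\ppi_r\le\ck\cdot u_{\ck}\le\mprg^{(g^*)}$ via $i^*\le r$, and treat the other branch of the max as in lds. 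Monotonicity of $\rev(g,R)$ within each case is manifest, and at the two case boundaries the formulas coincide (they evaluate to $V$ at the hd boundary and to $R$ at the lds/ldl boundary), so monotonicity is preserved across transitions as well.

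Now let $g_0$ be the modified group and $(g^*,R)$, $(g^{*\prime},R')$ the pre- and post-modification winner and reserve. If $g^{*\prime}=g^*$: when $g_0\neq g^*$, the internals of $g^*$ are untouched and $R'\ge R$ because the max of $\mprg$ over non-$g^*$ groups only rises, so the $R$-monotonicity of $\rev(g^*,\cdot)$ closes the case; when $g_0=g^*$, we have $R'=R$ (the max over non-$g^*$ groups is unchanged) while the internals of $g^*$ only improved, so the bid-monotonicity closes the case. If $g^{*\prime}\neq g^*$, only $g_0$'s $\mprg$ moved, so $g^{*\prime}=g_0$ and $\mprg^{(g_0)}_{\rm new}\ge\mprg^{(g^*)}_{\rm old}$; by the sandwich, the new revenue is $\ge R'\ge\mprg^{(g^*)}_{\rm old}$ (old $g^*$ still sits in the non-$g_0$ field with its unchanged $\mprg$), while the old revenue is $\le\mprg^{(g^*)}_{\rm old}$, so revenue weakly increases.

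The main obstacles are concentrated in the second ingredient: the identity \emph{winners' demands sum to exactly $j^*$} in case lds rests on integrality of demands together with the maximality of $j^*$, and the upper bound in case ldl rests on the inequality $\ppi_r\le u_{\ck}$ coming from $i^*\le r$. Verifying that $\rev(g,R)$ really is monotone across the hd$\to$lds/ldl and ldl$\to$lds transitions also requires tracking how $j^*$ and the runner-up index shift under a single bid change, and this bookkeeping---rather than any deep idea---is where the technical work lies.
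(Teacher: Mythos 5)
Your proof follows essentially the same route as the paper's: it establishes the sandwich $R \le \rev(\mmca{},\theta) \le \mprg^{(g^*)}$ via the identical per-case revenue formulas $\max(R,\max_j j\cdot u_j,V_2)$, $\max(R,V)$, and $\max(\ck\cdot\ppi_r,R,V)$, and then splits on whether the winning group changes, handling the group-change case exactly as the paper does. The only difference is one of completeness: where the paper dismisses the unchanged-winner case with ``check all three cases,'' you spell out the needed monotonicity of the winning group's revenue in $R$ and in its bids, including the case-boundary transitions, which is a correct and more careful rendering of the same argument.
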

\begin{proof}
Let $x$ be the new participant or the participant who has increased her bid. Let $\theta$ be the type profile before adding $x$ and $\theta'$ be the type profile after adding $x$. We need to prove that $\rev(\mmca{}, \theta') \geq \rev(\mmca{}, \theta)$. First we prove that the revenue of \mmca{} is between the \mprg{} of the highest group and the second highest group.
\begin{observation}
\label{clm:between}
We have $R \leq \rev(\mmca{}, \theta) \leq \mprg^{g^*}$.
\end{observation}
\begin{proof}
If \cref{cs:hd} of Definition \ref{def:mmcaalloc} happens then the revenue of \mmca{} is $\max(R, \max_{j \in [\ck]} j \cdot u_j, V_2)$ by Lemma \ref{lm:mmcacv}. Note that in this case the revenue is less than $V$ and more than $R$ and hence the proof of the observation follows.

If \cref{cs:lds} of Definition \ref{def:mmcaalloc} happens then the revenue of \mmca{} is $\max(R, V)$ by Lemma \ref{lm:mmcacv} since we sell $j^*$ items. Note that in this case the revenue is less than $\max_{j \in [\ck]} j \cdot u_j$ and more than $R$ and hence the proof of the observation follows.

If \cref{cs:ldl} of Definition \ref{def:mmcaalloc} happens then the revenue of \mmca{} is $\max(\ck \cdot \ppi_r, R, V)$ by Lemma \ref{lm:mmcacv} since we sell $\ck$ items in expectation. Note that in this case the revenue is less than $\max_{j \in [\ck]} j \cdot u_j$ and more than $R$ and hence the proof of the observation follows.
\qed
\end{proof}
By \cref{clm:between} we know that the revenue of \mmca{} is between \mprg{} of the highest group and \mprg{} of the second highest group. Therefore if we add participant $x$ and the group with highest \mprg{} changes then it means that the revenue in $\theta'$ is now $\mprg^{(g^*)}$ and hence increases. 

Now we assume that after adding $x$ the winning group does not change. Here we can check all the three cases that can happen in allocation function of \mmca{} (Definition \ref{def:mmcaalloc}) for new type profile $\theta'$ and see that adding $x$ can only increase the revenue.
\qed
\end{proof}

Now we prove that \porm{} of \mmca{} is at most $O(\ln k)$.

\begin{lemma}
$\porm{}(\mmca{}) \leq (2 + \ln(k))$
\end{lemma}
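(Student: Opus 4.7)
The plan is to establish $E[\wf(\mmca,\theta)]\geq \wf(\theta)/(2+\ln k)$ via a two-step sandwich. First I will prove that for every group $g$, the achievable welfare obeys $\wf^{(g)}\leq (2+\ln k)\mprg^{(g)}$. Second, I will argue that the expected welfare of \mmca{} matches $\mprg^{(g^*)}$ up to a constant factor that the first step's logarithmic slack can absorb. Combining these with $\mprg^{(g^*)}\geq\mprg^{(g_{opt})}$ (because \mmca{} picks the group with the largest $\mprg$, where $g_{opt}$ is the group realizing $\wf(\theta)$) gives $\wf(\theta)=\wf^{(g_{opt})}\leq(2+\ln k)\mprg^{(g_{opt})}\leq(2+\ln k)\mprg^{(g^*)}$, which the second step links back to $E[\wf(\mmca,\theta)]$.

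For the first step, fix $g$ and drop the superscript. Sort low-demand bidders non-increasingly by \ppi{}, place them end-to-end, and let $u_j$ denote the \ppi{} of the bidder covering position $j$; set $M=\max_{j\in[\ck]} j\cdot u_j$, so $\mprg=\max(V,M)$. Any feasible allocation for $g$ falls into one of three structural buckets: only the high-demand bidder (welfare $V$); only low-demand bidders with total demand $\leq k$ (welfare at most the fractional LP value $\sum_{j=1}^k u_j$); or the high-demand bidder plus low-demand bidders that fit into the remaining $<\ck$ items (welfare at most $V+\sum_{j=1}^{\ck} u_j$). Since $u_j\leq M/j$ on $[\ck]$, I obtain $\sum_{j=1}^{\ck} u_j\leq M H_{\ck}\leq M(1+\ln k)$; and $\sum_{j=\ck+1}^k u_j\leq\fk\cdot u_{\ck}\leq \ck u_{\ck}\leq M$. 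Each bucket is therefore at most $\mprg(2+\ln k)$, establishing the first step.

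For the second step, I case-split according to \cref{def:mmcaalloc} on the chosen group $g^*$. In Case~1 the welfare equals $V^{(g^*)}=\mprg^{(g^*)}$. In Case~2, letting $\hat{j}$ attain $M$, the branch condition $M\geq\max(R,V)$ forces $j^*\geq\hat{j}$, so the winners include every bidder with $\ppi\geq u_{\hat{j}}$, whose total value is at least $\hat{j}\cdot u_{\hat{j}}=M=\mprg^{(g^*)}$. Case~3 is the main obstacle: the expected welfare is $\frac{\ck}{A}\sum_{j=1}^A u_j$, and although one can verify $A\geq\ck$ (because the bidder covering position $\ck$ lies in $\{1,\ldots,r-1\}$ in this branch) and $\sum_{j=1}^A u_j\geq M$, the factor $\ck/A$ can leave the expectation strictly below $M$. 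I will close the gap by combining the monotonicity inequality $\sum_{j=1}^A u_j/A\geq \sum_{j=1}^k u_j/k$ (from $u_j$ being non-increasing) with $V\leq M\leq \sum_{j=1}^A u_j$, so that $\wf^{(g^*)}\leq V+\sum_{j=1}^k u_j$ can be bounded by $(A+k)/\ck$ times $E[\wf(\mmca,\theta)]$; since $(A+k)/\ck$ is bounded by a constant, the $(2+\ln k)$ slack from the first step absorbs it. The complementary situation $g^*\neq g_{opt}$ is handled directly using $E[\wf(\mmca,\theta)]\geq R\geq \mprg^{(g_{opt})}$ together with the first step, finishing the proof.
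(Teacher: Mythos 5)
Your two-step structure is exactly the paper's: the paper's key ingredient is the observation that $\mprg^{(g)} \geq \wf(G^{(g)})/(2+\ln(k))$ for every group, proved by bounding the high-demand contribution by $\mprg^{(g)}$ and the low-demand contribution by $\sum_{j=1}^{\ck} \mprg^{(g)}/j + \fk\cdot u_{\ck} \leq (1+\ln(k))\mprg^{(g)}$, which is your first step with the buckets merged. Where you genuinely diverge is the second step. The paper disposes of it in one sentence, asserting that the allocation rule ``always selects a set of winners such that their expected social welfare is at least the \mprg{} of the winning group.'' You are right to distrust this in \cref{cs:ldl} of \cref{def:mmcaalloc}: there the expected welfare is $\frac{\ck}{A}\sum_{j=1}^{A}u_j$, and since $A$ can approach $k$ while the mass of $\sum_{j=1}^{A}u_j$ sits entirely in the first $\ck$ positions, the expectation can drop to roughly $\mprg^{(g^*)}/2$ (e.g.\ $u_1=\cdots=u_{\ck}=1$ and $u_j=\varepsilon$ beyond, with $R=V=0$). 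So the paper's own justification has a real gap that your averaging patch --- $\wf^{(g^*)}\leq V+\sum_{j=1}^{k}u_j\leq \frac{A+k}{\ck}\cdot\frac{\ck}{A}\sum_{j=1}^{A}u_j$ --- correctly repairs; your handling of the $g^*\neq g_{opt}$ case via $E[\wf]\geq \rev \geq R\geq \mprg^{(g_{opt})}$ is also sound.

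The one blemish is quantitative: your Case-3 bound gives $\wf(\theta)\leq \frac{A+k}{\ck}\,E[\wf(\mmca,\theta)]$ with $\frac{A+k}{\ck}$ as large as (nearly) $4$, and this bound does not ``pass through'' the first step, so the slack there cannot absorb it. What you actually prove is $\porm(\mmca)\leq\max\left(4,\,2+\ln(k)\right)$, which equals $2+\ln(k)$ only once $k\geq e^{2}$; for smaller $k$ the literal constant in the lemma statement is not recovered (one can instead combine $E[\wf]\geq\frac{\ck}{A}\mprg^{(g^*)}\geq\frac{1}{2}\mprg^{(g^*)}$ with the first step to get $2(2+\ln(k))$, which is no better). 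This is immaterial for the advertised $O(\log k)$ guarantee, and arguably the lemma's exact constant is already optimistic given the gap in the paper's own argument, but you should state the bound you actually obtain rather than claiming the slack absorbs the constant.
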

\begin{proof}
First we show that for any group $g$ \mprg{} of $g$ is at least $\frac{1}{2 + \ln(k)}$ fraction of the maximum social welfare obtainable by group $g$.
\begin{observation}
\label{clm:mprgwf}
For any group $G^{(g)}$ we have $\mprg^{(g)} \geq \frac{1}{(2 + \ln(k))} \cdot \wf(G^{(g)})$.
\end{observation}
\begin{proof}
Let set $S$ contain bidders of $G^{(g)}$ which obtain the maximum social welfare and $T$ be equal to $\mprg^{(g)}$. Note that set $S$ can contain at most one high-demand bidder since their demand is larger than $\ck$. Moreover, the valuation of high-demand bidder of $S$ cannot be more than $T$ because otherwise \mprg{} of $G^{(g)}$ will be larger than $T$. Therefore the total social welfare of $S$ from high-demand bidders is at most $T$.

Now let us sort the low-demand bidders of $S$ by their \ppi{} and define value $u_{S,j}$ to be the maximum price per item with which we can sell at least $j$ items to the low-demand bidders of $S$, similar to Definition \ref{def:ival}. Note that for any $j \in [\ck]$ we have $u_{S,j} \leq \frac{T}{j}$ because otherwise \mprg{} of $G^{(g)}$ will be larger than $T$. Moreover for any $j \in \{\ck + 1 , \ldots, k\}$ we have $u_{S,j} \leq \frac{T}{\ck}$ because we sort by \ppi{}. Therefore the social welfare of $S$ from low-demand bidders is at most $\sum_{j=1}^{\ck} \frac{T}{j} + \ck \cdot \frac{T}{\ck}$ which is at most $(1 + \ln(k)) \cdot T$.

Set $S$ can get social welfare of at most $T$ from the high-demand bidders and $(1 + \ln(k)) \cdot T$ from low-demand bidders, therefore, the proof of the observation follows.
\qed
\end{proof}
Note that by the way we define the allocation function of \mmca{} it always selects a set of winners such that their expected social welfare is at least the \mprg{} of the winning group, see Definition \ref{def:mmcaalloc}. Since \mmca{} selects a group with the highest \mprg{} as the winning group by \cref{clm:mprgwf} we know that the maximum social welfare cannot be more than $(2 + \ln(k))$ times the highest \mprg{} and hence the proof of the lemma follows.
\qed
\end{proof}

Now we provide evidence that no randomized algorithm can obtain \porm{} better than $\Omega(\ln(k))$. Note that an optimum randomized mechanism ($\M^*$) has to first give a probability distribution to groups and pick the winning group according to that distribution. Let's assume $p^*$ be the function which maps each group $g$ to its probability of winning. Now we argue that $p^*$ cannot be dependent to the social welfare of $g$ otherwise $\M^*$ will not satisfy RM. Because, in this case a bidder can increase her bid high enough so that probability $p^*(g)$ goes to its upper-limit which decrease the critical value of the other bidders and hence break RM. We further argue that $p^*$ should be dependent to one factor of each group which is the same across the group. For example if it is dependent to two factors then increasing the first factor might remove the load from the second factor and hence the second factor is free to go down without changing the probability. If $p^*$ is dependent to only one factor then the best way to make it as close as possible to social welfare is to define $j \cdot u^{(g)}_j$ (see Definition \ref{def:ival}) for each group similar to our \mprg{} value assigned to each group. Therefore, we have evidence that function $p^*$ can be dependent to a value which can be off from social welfare by factor $\ln(k)$.

Now consider the simple case of image-text auction where we have two groups: text-ads and image-ads. We want to assign the winning probabilities to each group. Now suppose we are given two values $a$ for image-ads and $b$ for text-ads, and also we know that social welfare of text-ads is either $b$ or $\ln(k)\cdot b$. If we pick each group with probability $1/2$ and $1/2$, the expected social welfare is $1/2$ approximation to its max value. The question is can we do better by having a more clever randomization.

We prove that no other randomization can give us a factor better than $1/2 + 1/(2 \sqrt{\ln (k)})$.  Suppose the value of $a$ is equal to $X$ and the value of $b$ is equal to $X/\sqrt{\ln k}$ where the social welfare of text-ads can be $X \cdot \sqrt{\ln (k)}$. Now suppose $\M^*$ gives the items to image-ads with probability $F$ and to text-ads with probability $1-F$. If the social welfare  of text-ads is $X/\sqrt{\ln (k)}$, we get \porm{} of $F + (1-F)/\sqrt{\ln (k)}$ and if the social welfare of text-ads is $X \cdot \sqrt{\ln (k)}$, we get \porm{} of $(1-F) + F/\sqrt{\ln (k)}$. If we want to maximize the minimum of the two \porm{}s we have to set $F$ to $1/2$ which gives \porm{} of $1/2 + 1/(2 \sqrt{\ln (k)})$. Note that if the number of groups increases to $m$ then this factor changes to $1/m + 1/(m \sqrt{\ln (k)})$, therefore, the best way is to give probability one to the group with the best assigned value and lose factor $\ln(k)$.

\end{document}